\DeclareMathAlphabet{\mathbfsf}{\encodingdefault}{\sfdefault}{bx}{n}
\newtheorem{invariant}{Invariant}
\newcommand{\batchparallelcap}[0]{Batch-Parallel}
\newcommand{\batchparallel}[0]{batch-parallel}
\newcommand{\batchdynamic}[0]{batch-dynamic}
\newcommand{\batchdynamiccap}[0]{Batch-Dynamic}
\newcommand{\Boruvka}{Bor\r{u}vka}
\newcommand{\parlevelsearch}{\textsc{ParallelLevelSearch}}
\newcommand{\intlevelsearch}{\textsc{InterleavedLevel\-Search}}
\newcommand{\defn}[1]{\emph{\textbf{#1}}}
\newcommand{\codevar}[1]{\mathit{#1}}
\newcommand{\myparagraph}[1]{\smallskip\noindent {\bf #1.}}
\newcommand{\id}[1]{\ifmmode\mathit{#1}\else\textit{#1}\fi}
\newcommand{\const}[1]{\ifmmode\mbox{\textc{#1}}\else\textsc{#1}\fi}
\newcommand{\assign}{\gets}
\newcommand{\mpram}{$\mathsf{TRAM}$}
\newcommand{\pram}{$\mathsf{PRAM}$}
\newcommand{\crcwpram}{$\mathsf{CRCW}\ \mathsf{PRAM}$}
\newcommand{\process}{thread}
\newcommand{\processes}{threads}
\newcommand{\forkins}{\texttt{fork}}
\newcommand{\insend}{\texttt{end}}
\newcommand{\bark}{\Delta}
\renewcommand{\log}[1]{\mathop{\textnormal{lg}} #1}
\algrenewcommand\algorithmicindent{1em}
\algnewcommand{\algcomment}[1]{\hfill{\color{purple!40!black}\emph{// #1}}}    
\algnewcommand{\alglinecomment}[1]{{\color{purple!40!black}\emph{// #1}}}      
\newtheorem{theorem}{Theorem}[section]
\newtheorem{lemma}{Lemma}[section]
\begin{document}

\title{Parallel \batchdynamiccap{} Graph Connectivity\footnote{This is the full version of the paper appearing in the ACM Symposium on Parallelism in Algorithms and Architectures (SPAA), 2019}}

\author{\normalsize Umut A. Acar \\ \small Carnegie Mellon University \\ \small umut@cs.cmu.edu \and \normalsize Daniel Anderson \\ \small Carnegie Mellon University \\ \small dlanders@cs.cmu.edu \and \normalsize Guy E. Blelloch \\ \small Carnegie Mellon University \\ \small guyb@cs.cmu.edu \and \normalsize Laxman Dhulipala \\ \small Carnegie Mellon University \\ \small ldhulipa@cs.cmu.edu}
\date{}

\maketitle

\begin{abstract}
  In this
  paper, we study batch parallel algorithms for the dynamic
  connectivity problem, a fundamental problem that has received
  considerable attention in the sequential setting.    The most well known 
  sequential algorithm for dynamic connectivity is the elegant level-set algorithm of Holm, de
  Lichtenberg and Thorup (HDT), which achieves $O(\log^2 n)$ amortized
  time per edge insertion or deletion, and $O(\log n / \log\log n)$ time per query.


  We design a parallel batch-dynamic connectivity
  algorithm that is work-efficient with respect to the HDT algorithm
  for small batch sizes, and is asymptotically faster when the average
  batch size is sufficiently large.  Given a sequence of batched
  updates, where $\bark{}$
  is the average batch size of all deletions, our algorithm achieves
  $O(\log n \log(1 + n / \bark{}))$ expected amortized work per edge
  insertion and deletion and $O(\log^3 n)$ depth w.h.p. Our algorithm
  answers a batch of $k$ connectivity queries in $O(k \log(1 + n/k))$
  expected work and $O(\log n)$ depth w.h.p.  To the best of our knowledge, our
  algorithm is the first parallel batch-dynamic algorithm for
  connectivity.
\end{abstract}

\clearpage

\section{Introduction}\label{sec:intro}

Computing the connected components of a graph is a fundamental
problem that has been studied in many different models of
computation~\cite{tarjan1975efficiency, ShiloachV82, reingold08connectivity,
  holm2001poly, ahn2012connectivity, andoniparallel}.
The \defn{connectivity problem} takes as input an undirected graph $G$
and requires an assignment of labels to vertices such that two vertices have the same
label if and only if they are in the same connected component.
The dynamic version of the problem requires maintaining a data
structure over an $n$ vertex undirected graph that supports 
insertions and deletions of edges, and queries
of whether two vertices are in the same connected component.
Despite the large body of work on the dynamic connectivity problem over the past
two decades~\cite{henzinger1995randomized, eppstein1997sparsification,
  holm2001poly, thorup1999decremental, thorup2000near,
  henzinger2001maintaining, wulff2013faster, kapron2013dynamic,
  rasumssen16faster, huang17dynconn, nanongkai2017dynamic, wulff2017fully}, little is
known about \batchdynamic{} connectivity algorithms that process
\emph{batches of queries and updates}, either sequentially or in parallel.

Traditional dynamic algorithms were motivated by applications
where data undergos small changes that
can be adequately handled by updates of single elements. Today, however,
applications operate on increasingly large datasets that undergo
rapid changes over time: for example, millions of individuals can
simultaneously interact with a web site, make phone calls, send emails
and so on. In the context of these applications, traditional dynamic
algorithms require serializing the changes made and processing them
one at a time, missing an opportunity to exploit the parallelism
afforded by processing batches of changes.


Motivated by such applications, there has been recent
interest in developing theoretically efficient parallel \batchdynamic{}
algorithms~\cite{acar2011parallelism, simsiri2016work,  acar2017brief,
  tseng2018batch}.
In the \batchdynamic{} setting, instead of applying one update or
query at a time, a whole batch is applied.  A batch could be of size
$\lg{n}$, $\sqrt{n}$, or $n/ \lg n$ for example.  There are two
advantages of applying operations in batches.
\vspace{0.5em}
\begin{enumerate}[ref={\arabic*-}, topsep=0pt,itemsep=0ex,partopsep=0ex,parsep=1ex, leftmargin=*]
\item Batching operations allows for more parallelism.
\item Batching operations can reduce the cost of each update.
\end{enumerate}
\vspace{0.5em}
 In this paper we are interested in both these advantages.
We use the term parallel \batchdynamic{} to mean algorithms that process
batches of operations instead of single ones, and for which the
algorithm itself is parallel. The underlying parallel model used in
this paper is a formalization of the widely used shared-memory work-depth
model~\cite{Blelloch96,blumofe1999scheduling,BGM99,ABP01}.

Understanding the connectivity structure of graphs is of significant
practical interest, for example, due to its use as a primitive for
clustering the vertices of a graph~\cite{vldbsurvey}.  Due to the
importance of connectivity there are several implementations of
parallel \batchdynamic{} connectivity algorithms~\cite{mccoll2013new,
  iyer2015cellq, Wickramaarachchi15, Iyer2016, Sha2017, Vora2017}. In
the worst case, however, these algorithms may recompute the connected
components of the entire graph even for very small batches. Since this
requires $O(m + n)$ work, it makes the worst-case performance of the
algorithms no better than running a static parallel algorithm.  On the
theoretical side, existing \batchdynamic{} efficient connectivity
algorithms have only been designed for restricted settings, e.g., in
the incremental setting when all updates are edge
insertions~\cite{simsiri2016work}, or when the underlying graph is a
forest~\cite{reif94treecontraction, acar2017brief, tseng2018batch}.
Therefore, two important questions are:

\begin{enumerate}[ref={\arabic*-}, leftmargin=*]
  \item \emph{Is there a \batchdynamic{} connectivity algorithm that is
  asymptotically faster than existing dynamic connectivity algorithms
  for large enough batches of insertions, deletions and queries?}
  \item \emph{Can the \batchdynamic{} connectivity algorithm be
      parallelized to achieve low worst-case depth?}
\end{enumerate}

\noindent In this paper we give an algorithm that answers both of these
questions affirmatively. To simplify our exposition and present the
main ideas, we first give a less efficient version of the algorithm
that runs in $O(\log^4 n)$ depth w.h.p.\footnote{ We say that an
  algorithm has $O(f(n))$ cost \defn{with high probability (w.h.p.)}
  if it has $O(k\cdot f(n))$ cost with probability at least
  $1 - 1/n^{k}$, $k \geq 1$.  } and performs $O(\log^2 n)$ expected
amortized work per update, making it work-efficient with respect to
the sequential algorithm of Holm, de Lichtenberg, and Thorup.  Next,
we describe the improved algorithm which runs in $O(\log^3 n)$ depth
w.h.p.
and achieves an improved work bound that is asymptotically
faster than the HDT algorithm for sufficiently large batch sizes. We
note that our depth bounds hold even when processing the updates one a
time, ignoring batching. Our improved work bounds are derived by a
novel analysis of the work performed by the algorithm over all batches
of deletions.
Our contribution is summarized by the following theorem:
\begin{theorem}
There is a parallel \batchdynamic{} data structure which, given
batches of edge insertions, deletions, and connectivity queries
processes all updates in $O\left(\log n \log\left( 1 +
    \frac{n}{\bark{}} \right) \right)$ expected amortized work per
edge insertion or deletion where $\bark{}$ is the average batch size
of a deletion operation. The cost of connectivity queries is $O(k
\log(1 + n/k))$ work and $O(\log n)$ depth for a batch of $k$ queries.
The depth to process a batch of edge insertions and deletions is
$O(\log n)$ and $O(\log^3 n)$ respectively.
\end{theorem}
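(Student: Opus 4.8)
The plan is to keep the hierarchical structure of the HDT algorithm --- a spanning forest $F$ of $G$, a level in $\{0,1,\dots,\log n\}$ assigned to every edge, and the invariant that every connected component of the sub-forest $F_i\subseteq F$ of edges of level $\ge i$ has at most $n/2^i$ vertices --- changing only the representation: the forest $F_i$ at each level is stored in a \batchparallel{} Euler tour tree, so that a batch of $k$ edge links, edge cuts, subtree extractions, component-size queries, or ``same-component'' queries costs $O(k\log(1+n/k))$ expected work and $O(\log n)$ depth w.h.p.\ (the data structure established earlier in the paper). Connectivity queries are then immediate: a batch of $k$ queries is $k$ same-component lookups in $F_0$. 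For a batch of $j$ insertions I would add every new edge at level $0$, run a \batchparallel{} spanning-forest computation on the batch with each current component of $G$ contracted to a vertex to decide which inserted edges join two distinct components, \batchparallel{}-link exactly those into $F_0$, and record the rest as level-$0$ non-tree edges; this is $O(j\log(1+n/j))$ work and $O(\log n)$ depth, with the potential argument below charging each inserted edge an extra $O(\log n\log(1+n/\bark{}))$ up front for its possible future promotions.

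The technical core is the deletion batch. Given a batch of $k$ deletions I would discard the deleted non-tree edges directly, \batchparallel{}-cut the deleted tree edges from $F$ (splitting some trees into fragments), and then sweep levels $\ell=\log n,\log n-1,\dots,0$, running a \parlevelsearch{} procedure at each level: for every tree that was split, keep one maximum-size fragment and promote to level $\ell+1$ the level-$\ell$ tree edges of every other fragment (each non-maximum fragment has at most half the size of the original tree, so the size invariant is preserved); scan the level-$\ell$ non-tree edges incident to the non-maximum fragments, promoting those internal to a single fragment and collecting the cross-fragment ones; contract each surviving fragment and compute a \batchparallel{} spanning forest of the collected cross-fragment edges, reinstating those spanning edges as level-$\ell$ tree edges; pass whatever fragments remain mutually disconnected down to level $\ell-1$. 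Implementing each level search with the batch Euler-tour operations and the \batchparallel{} connectivity primitive gives $O(\log^2 n)$ depth per level, hence $O(\log^3 n)$ depth for the whole deletion batch, while the non-tree deletions and the final cut contribute only $O(k\log(1+n/k))$ work and $O(\log n)$ depth.

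For the work bound I would use the potential $\Phi=c\log(1+n/\bark{})\sum_{e\in G}\left(\log n-\mathrm{level}(e)\right)$ for a suitable constant $c$: an insertion raises $\Phi$ by at most $c\log n\log(1+n/\bark{})$, which dominates its amortized cost; a deletion only lowers $\Phi$; and each level promotion of an edge lowers $\Phi$ by $c\log(1+n/\bark{})$, which must pay for moving that edge between Euler tour trees. Since promotions are always performed as batched subtree-extractions of whole fragments, a fragment of size $s$ costs $O(s\log(1+n/s))$, and one needs to argue that the fragment sizes and the structure of the level search make this charge legitimate against $\Phi$. The work left unpaid by $\Phi$ is the per-batch overhead --- scanning cross-fragment non-tree edges that are neither promoted nor reinstated, plus the spanning-forest computations --- which I would bound by $O(k\log n\log(1+n/k))$ for a batch of $k$ deletions. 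Aggregating over all deletion batches with sizes $k_1,k_2,\dots$, $D=\sum_i k_i$, and $\bark{}=D/(\text{number of deletion batches})$, concavity of $k\mapsto k\log(1+n/k)$ and Jensen's inequality give $\sum_i k_i\log(1+n/k_i)\le D\log(1+n/\bark{})$, so the total overhead is $O(D\log n\log(1+n/\bark{}))$; combined with the $O(m\log n\log(1+n/\bark{}))$ of potential injected by all insertions, this yields the claimed $O(\log n\log(1+n/\bark{}))$ expected amortized work per insertion or deletion.

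I expect the main obstacle to be exactly this deletion analysis: one must design the \parlevelsearch{} so that (i) it preserves HDT's correctness and the size invariant while acting on many fragments at once, (ii) its Euler-tour work telescopes against $\Phi$ even though promotions are batched at the granularity of fragments rather than single edges, and (iii) the residual per-batch overhead is genuinely $O(k\log n\log(1+n/k))$ rather than $O(k\log^2 n)$ --- it is this last point, together with the Jensen aggregation over all deletion batches (the ``novel analysis'' referred to above), that is responsible for the improvement from $\log^2 n$ to $\log n\log(1+n/\bark{})$. A secondary obstacle is keeping the depth of the level search at $O(\log^2 n)$ per level, so that the whole deletion batch runs in $O(\log^3 n)$ depth rather than the $O(\log^4 n)$ of the preliminary algorithm; this is handled by pipelining the $\log n$ level searches and the fragment contractions.
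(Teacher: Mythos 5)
There is a genuine gap in the deletion analysis, and it is precisely the point you defer as ``the main obstacle'': your level search scans \emph{all} level-$\ell$ non-tree edges incident to the non-maximum fragments in one pass. The work of that scan cannot be charged anywhere. Cross-fragment edges that are neither promoted (they reconnect, so their level does not change) nor selected by the spanning forest release no potential, and their number is unrelated to $k$: delete a single tree edge ($k=1$) whose removal splits a tree into two fragments joined by $\Theta(n)$ parallel non-tree edges at that level; your procedure touches all of them, exactly one is reinstated and none are promoted, so the level costs $\Omega(n)$ work against your claimed residual overhead of $O(k\log n\log(1+n/k)) = O(\log n)$. The paper's missing ingredient is the doubling search: each active component examines its incident non-tree edges with a geometrically increasing budget and stops as soon as a replacement is found, so the number of edges examined is within a constant factor of the number whose level actually decreases (plus a charge to the replacement edge or the component). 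For the improved $\log(1+n/\bark{})$ bound the paper goes further and \emph{interleaves} the doubling with the spanning-forest rounds --- the budget is never reset across rounds, and both the discovered tree edges and the edges to be pushed are buffered and applied once at the end of the level --- so the per-component cost telescopes to $O\left(p_c\log\left(1+\frac{n_c}{p_c}\right)\right)$ rather than $O(p_c\log n)$ (Lemmas~\ref{lem:component-doubling}--\ref{lem:interleaved_level_search_cost}). None of this machinery appears in your proposal, and without it the stated work bound fails.

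A second, smaller problem is the accounting itself. Charging each promotion a fixed $c\log(1+n/\bark{})$ of potential does not cover its cost: a push batch of size $p_{b,i}$ at one level costs $\Theta\left(\log\left(1+\frac{n}{p_{b,i}}\right)\right)$ per edge, which is $\Theta(\log n)$ when $p_{b,i}$ is small even though $\bark{}$ may be large, so the local per-edge potential argument as stated does not go through. The paper instead aggregates globally: it sums $p_{b,i}\log(1+n/p_{b,i})$ over all $d\log n$ (batch, level) pairs, applies the concavity bound of Lemma~\ref{lem:component_bounds} to get $O\left(P\log\left(1+\frac{nd\log n}{P}\right)\right)$, and then uses $P\le m\log n$, $d=K/\bark{}$, $K\le m$, and Lemma~\ref{lem:batch_bound_is_increasing} to reach $O\left(m\log n\log\left(1+\frac{n}{\bark{}}\right)\right)$ charged to insertions. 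Your Jensen step handles only the $k_b\log(1+n/k_b)$ per-batch overhead; the analogous aggregation for the push term is exactly where the ``novel analysis'' lives and must be supplied. (Your treatment of queries and insertions, and the level/invariant bookkeeping, are otherwise consistent with the paper, modulo the reversed level convention; the depth claim also needs the bounded number of rounds per level that the monotone doubling provides, not pipelining.)
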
\label{thm:parallel-bounds}

\subsection{Technical Overview}
The starting point of our algorithm is the classic Holm, de
Lichtenberg and Thorup (HDT) dynamic connectivity
algorithm~\cite{holm2001poly}\footnote{We provide full details of the HDT
algorithm in Section~\ref{sec:hdt}.}.  Like nearly all existing dynamic
connectivity algorithms, the HDT algorithm maintains a spanning forest
certifying the connectivity of the graph. The algorithm maintains a
set of $\log n$ nested forests under two carefully designed
invariants. The forests are represented using the Euler tour tree
(ET-tree) data
structure~\cite{miltersen1994complexity,henzinger1995randomized}.

The main challenge in a dynamic connectivity algorithm is to
efficiently find a \emph{replacement edge}, or a non-tree edge going
between the two disconnected components after deleting a tree edge.
The key idea of the HDT algorithm is to organize the spanning-forest
of the graph into $\log n$ levels of trees. The top-most level of the
structure stores a spanning forest of the entire graph, and each level
contains all tree-edges stored in levels below it. The algorithm
ensures that the largest size of a component at level $i$ is $2^{i}$.
Using these invariants, the algorithm is able to cleverly search the
tree edges so that each non-tree edge is examined at most $\log n$
times as a candidate replacement edge.
The main idea is to store each non-tree edge at a single level
(initially the top-most level), and push the edge to a lower level
each time it is unsuccessfully considered as a replacement edge. Since
there are $\log n$ levels, and the cost of discovering, processing,
and removing an edge from each level using ET-tree operations is
$O(\log n)$, the amortized cost of the HDT algorithm is $O(\log^2 n)$
per edge operation. We now discuss the main challenges and sequential
bottlenecks that arise in the HDT algorithm that a parallel
\batchdynamic{} algorithm must address.

\myparagraph{Efficiently searching for replacements} A challenge, and
sequential bottleneck in the HDT algorithm is the fact that it
processes each non-tree edge one at a time---a property which is
crucial for achieving good amortized bounds.
Aside from hindering parallelism, processing the edges one at a time
eliminates any potential for improved batch bounds, since finding the
representative of the endpoints of an edge costs $O(\log n)$ time per
query. Therefore,
to obtain an efficient batch or parallel
algorithm we must examine batches of \emph{multiple} non-tree edges at
a time, while also ensuring that we do not perform extra work that
cannot be charged to level-decreases on an edge. Our approach is to
use a \emph{doubling technique}, where we examine sets of non-tree
edges with geometrically increasing sizes.

\myparagraph{Handling Batches of Deletions}
Another challenge is that processing a batch of deletions can shatter
a component into multiple disconnected pieces. Since the HDT algorithm
deletes at most a single tree edge per deletion operation, it handles
exactly two disconnected pieces per level. In contrast, since we
delete batches of edges in our \batchdynamic{} algorithm, we may have
many disconnected pieces at a given level, and must search for
replacement edges reconnecting these pieces. Our algorithm searches
for a replacement edge from each piece that is small enough to be
pushed down to the next lower level.


Each round of both of our algorithms can be viewed as calling an
oracle which returns a set of replacement edges incident on the
disconnected pieces that we are trying to reconnect.
Unlike in the HDT algorithm which terminates once it finds any
replacement edge, the edges returned by the oracle may not fully
restore the connectivity of the original component. In particular, the
replacement edges that we find may contain multiple edges going
between the same pieces (like in \Boruvka{}'s algorithm) or may
contain cycles, which must be dealt with since each level of the data
structure represents a forest.
Our approach to handling these issues is to run a static spanning
forest algorithm on the replacement edges found in this round, and
insert only the spanning forest edges into the ET-tree at the current
level.

Both of our algorithms alternate between a first phase which calls the
oracle to find a set of replacement edges, and a second phase which
determines a set of replacement edges that can be committed as tree
edges. The difference is that our first (simpler) algorithm
(Section~\ref{sec:simple-parallel}) requires $O(\log^2 n)$ oracle queries per
level, whereas the second algorithm
(Section~\ref{sec:interleaved-parallel}) only requires $O(\log n)$
oracle queries due to a more careful doubling technique.

\myparagraph{Dynamic trees supporting batching}
Another obstacle to improving on the bounds of the HDT algorithm is
that the classic ET-tree data structure performs links and cuts one at
a time. To achieve good batch bounds for forest operations, we use a
recently developed solution to the \batchparallel{} forest
connectivity problem by Tseng et al.~\cite{tseng2018batch}. Their data
structure, which we refer to as a \batchparallel{} ET-tree processes a
set of $k$ links, cuts, or connectivity queries in $O(k \log (1 +
n/k))$ work and $O(\log n)$ depth. We extend the data structure to
supports operations such as fetching the first $l$ non-tree edges in
the tree efficiently.

\section{Preliminaries}\label{sec:prelims}

\myparagraph{Model}
In this paper we analyze our algorithms in the work-depth model using
fork-join style parallelism. Specifically, we use a particular work-depth
model called the \emph{Threaded Random Access Machine} (\mpram{}),
which is closely related to the \pram{} but more closely models
current machines and programming paradigms that are asynchronous and
support dynamic forking. The model can work-efficiently cross-simulate
a \crcwpram{}, equipped with the same atomic instructions, and is
therefore essentially equivalent to the classic \crcwpram{} model. We
formally define the model and provide more details about the
simulations in Appendix~\ref{sec:app-model},
and refer the interested reader to~\cite{blelloch18notes} for full
details.


Our algorithms are designed using \emph{nested fork-join parallelism}
in which a procedure can $\emph{fork}$ off another procedure call to
run in parallel and then wait for forked calls to complete with a
$\emph{join}$ synchronization~\cite{Blelloch96}.
Our efficiency bounds are stated in terms of work and depth, where
\defn{work} is the total number of vertices in the \process{} DAG and
where \defn{depth} (\defn{span}) is the length of the longest path in
the DAG~\cite{Blelloch96}.


\myparagraph{Parallel Primitives}
The following parallel procedures are used throughout the paper.
A \defn{semisort} takes an input array of
elements, where each element has an associated key and reorders the
elements so that elements with equal keys are contiguous, but elements
with different keys are not necessarily ordered.  The purpose is to
collect equal keys together, rather than sort them. Semisorting a
sequence of length $n$ can be performed in $O(n)$ expected work and
$O(\log n)$ depth w.h.p. assuming access to a uniformly
random hash function mapping keys to integers in the range $[1,
n^{O(1)}]$~\cite{Reif99, gu2015top}.

A \defn{parallel dictionary} data structure supports batch insertion,
batch deletion, and batch lookups of elements from some universe with
hashing.  Gil et al.\ describe a parallel dictionary that uses linear
space and achieves $O(k)$ work and $O(\log^* k)$ depth w.h.p.\ for a
batch of $k$ operations~\cite{gil1991towards}.

The \defn{pack} operation takes an $n$-length sequence $A$ and an
$n$-length sequence $B$ of booleans as input. The output is a sequence
$A'$ of all the elements $a \in A$ such that the corresponding
entry in $B$ is $\codevar{true}$. The elements of $A'$ appear in the
same order that they appear in $A$. Packing can be easily implemented
in $O(n)$ work and $O(\log n)$ depth~\cite{JaJa92}.

\myparagraph{Useful Lemmas}
The following lemmas are useful for analyzing the work bounds of our
parallel algorithms. We provide proofs in Appendix~\ref{sec:additional-proofs}.
\begin{restatable}{lemma}{componentbounds}\label{lem:component_bounds}
	Let $n_1, n_2, ..., n_c$ and $k_1, k_2, ..., k_c$ be sequences of non-negative integers such that $\sum k_i = k$, and $\sum n_i = n$. Then
	\begin{equation}
		\sum_{i=1}^c k_i \log\left(1 + \frac{n_i}{k_i}\right) \leq k \log\left(1 + \frac{n}{k}\right).
	\end{equation}
\end{restatable}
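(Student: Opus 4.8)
The plan is to reduce to the two-summand case and then iterate. First I would dispose of the boundary cases. If $k = 0$ then every $k_i = 0$, and with the convention that $0\cdot\log\left(1+x/0\right) = 0$ (the limiting value of $y\log\left(1+x/y\right)$ as $y\to 0^+$, for each fixed $x \geq 0$), both sides of the claimed inequality equal $0$. Otherwise, any index with $k_i = 0$ contributes $0$ to the left-hand side, so I would simply delete all such indices: this leaves $k$ unchanged, does not change the left-hand side, and can only decrease $\sum_i n_i$. Since $x \mapsto k\log\left(1 + x/k\right)$ is nondecreasing (and $k > 0$ in this case), it then suffices to prove the inequality under the assumption that every $k_i$ is strictly positive.

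The core of the argument is the case $c = 2$: for $k_1, k_2 > 0$ and $n_1, n_2 \geq 0$,
\begin{equation}
 k_1\log\left(1+\frac{n_1}{k_1}\right) + k_2\log\left(1+\frac{n_2}{k_2}\right) \leq (k_1+k_2)\log\left(1+\frac{n_1+n_2}{k_1+k_2}\right).
\end{equation}
To prove this I would rewrite $1 + n_i/k_i = (k_i + n_i)/k_i$, set $s = k_1 + k_2$, and view $k_1/s$ and $k_2/s$ as the weights of a convex combination. Concavity of the logarithm (Jensen's inequality) then gives
\begin{align*}
 \frac{k_1}{s}\log\left(\frac{k_1+n_1}{k_1}\right) + \frac{k_2}{s}\log\left(\frac{k_2+n_2}{k_2}\right)
 &\leq \log\left(\frac{k_1}{s}\cdot\frac{k_1+n_1}{k_1} + \frac{k_2}{s}\cdot\frac{k_2+n_2}{k_2}\right) \\
 &= \log\left(\frac{s + n_1 + n_2}{s}\right),
\end{align*}
and multiplying through by $s$ yields the two-summand bound. (Equivalently, the perspective function $(x,y) \mapsto y\log\left(1 + x/y\right)$ is concave and positively homogeneous of degree one, hence superadditive; I expect I would simply present the elementary Jensen computation above.)

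Finally I would induct on $c$, the base case $c = 1$ being trivial. For the step, grouping the first $c-1$ summands and applying the inductive hypothesis bounds their contribution by $\left(\sum_{i<c} k_i\right)\log\left(1 + \left(\sum_{i<c} n_i\right)/\left(\sum_{i<c} k_i\right)\right)$; applying the two-summand inequality with $\left(\sum_{i<c} n_i, \sum_{i<c} k_i\right)$ playing the role of $(n_1, k_1)$ and $(n_c, k_c)$ that of $(n_2, k_2)$ then gives exactly $k\log\left(1 + n/k\right)$, completing the induction. The only parts that need genuine care are the bookkeeping around zero-valued $k_i$, the accompanying $0\cdot\infty$ convention, and the monotonicity remark used to reduce to the case of strictly positive $k_i$; the analytic content is a single application of Jensen's inequality for the logarithm, requiring no differentiation.
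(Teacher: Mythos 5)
Your proof is correct and follows essentially the same route as the paper's: induction on $c$, with the inductive step reduced to a two-summand inequality that is established by Jensen's inequality for the concave logarithm, using the weights $k_1/(k_1+k_2)$ and $k_2/(k_1+k_2)$ exactly as the paper uses $(k-k_c)/k$ and $k_c/k$. The only difference is that you explicitly handle the degenerate $k_i = 0$ cases and the $0\cdot\log\left(1+x/0\right)$ convention, which the paper leaves implicit; this is a harmless (and slightly more careful) addition rather than a different argument.
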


\begin{restatable}{lemma}{batchboundisrootdominated}\label{lem:batch_bound_is_root_dominated}
	For any non-negative integers $n$ and $r$,
	\begin{equation}
	\sum_{w=0}^{r} 2^{w} \log \left( 1 + \frac{n}{2^{w}} \right) = O\left( 2^{r}\log\left(1 + \frac{n}{2^{r}} \right) \right).
	\end{equation}
\end{restatable}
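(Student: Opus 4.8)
The plan is to show that the summands $g(w) := 2^{w}\log\lp 1 + n/2^{w}\rp$ grow geometrically in $w$ over the relevant range, so that the whole sum is dominated, up to a constant factor, by its largest term $g(r)$. The case $n=0$ is trivial (both sides vanish), so assume $n\ge 1$. Writing $a := n/2^{w+1}$, so that $n/2^{w}=2a$, the ratio of consecutive summands is
\[
\frac{g(w)}{g(w+1)} \;=\; \frac{2^{w}\log\lp 1+2a\rp}{2^{w+1}\log\lp 1+a\rp} \;=\; \frac{1}{2}\cdot\frac{\log\lp 1+2a\rp}{\log\lp 1+a\rp},
\]
so everything reduces to bounding the ratio $\log(1+2a)/\log(1+a)$ strictly below $2$.

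The technical heart is the inequality
\[
\log\lp 1+2a\rp \;\le\; (\log 3)\,\log\lp 1+a\rp, \qquad a\ge 1,
\]
equivalently $(1+a)^{\log 3}\ge 1+2a$ for $a\ge 1$ (recall $\log$ is $\log_2$, so $\log 3\approx 1.585$). I would prove this by elementary calculus: $p(a):=(1+a)^{\log 3}-(1+2a)$ satisfies $p(1)=0$, and $p'(a)=(\log 3)(1+a)^{\log 3-1}-2$ is increasing with $p'(1)>0$, so $p\ge 0$ on $[1,\infty)$. Granting this, whenever $2^{w+1}\le n$ (so $a=n/2^{w+1}\ge 1$) we get $g(w)\le \alpha\,g(w+1)$ with $\alpha:=(\log 3)/2<1$. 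Iterating down from $w=r$ gives $g(w)\le \alpha^{\,r-w}g(r)$ for every $0\le w\le r$, and summing the geometric series,
\[
\sum_{w=0}^{r} g(w) \;\le\; g(r)\sum_{w=0}^{r}\alpha^{\,r-w} \;<\; \frac{g(r)}{1-\alpha} \;=\; \frac{2}{2-\log 3}\,g(r) \;=\; O\!\lp 2^{r}\log\lp 1+\tfrac{n}{2^{r}}\rp\rp,
\]
as claimed.

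The main obstacle is exactly this ratio bound, and it carries a caveat: as $a\to 0^{+}$ one has $\log(1+2a)/\log(1+a)\to 2$, so the summands $g(w)$ stop growing (they saturate near $n/\ln 2$) once $2^{w}$ exceeds $n$, and the geometric-decay step is licensed only while $2^{w+1}\le n$. Hence the constant bound holds in the regime $2^{r}\le n$ (equivalently $r\le \log n+O(1)$), which is exactly how the lemma is invoked, since the doubling sizes in the algorithm never need to exceed the size of the component being reconnected. An equivalent packaging that some may find cleaner is to note that $\phi(t):=\log(1+t)/t$ is monotonically decreasing, that $\sum_{w}g(w)=n\sum_{w}\phi(n/2^{w})$, and that the arguments $n/2^{w}$ for $w=r,r-1,\dots,0$ form a geometric progression lying in $[1,\infty)$ when $2^{r}\le n$; one then only needs $\phi(2s)\le \beta\,\phi(s)$ for a constant $\beta<1$ on $s\ge 1$, which is the same inequality in disguise.
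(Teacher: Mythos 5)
Your proof is correct on the domain you state, and it takes a genuinely different route from the paper's. The paper (Appendix~\ref{sec:additional-proofs}) peels the logarithm additively, writing $\log(1+n/2^w) \le (r-w) + \log(1+n/2^r)$ and then using $\sum_{w=0}^r (r-w)2^w = O(2^r)$, whereas you show the summands themselves decay geometrically as $w$ decreases, via the inequality $(1+a)^{\log 3} \ge 1+2a$ for $a \ge 1$ (which checks out: equality at $a=1$ and the monotone derivative comparison is valid), and then sum a geometric series with ratio $\alpha = (\log 3)/2 < 1$. Both give explicit constants; yours isolates exactly where the hypothesis on $n$ versus $2^r$ enters.

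The caveat you raise is real, and it is the most valuable part of your write-up. As literally stated, ``for any non-negative integers $n$ and $r$,'' the lemma is false when $2^r \gg n$: for $n \ge 1$, every term with $2^w \ge n$ satisfies $n \le 2^w\log(1+n/2^w) \le n/\ln 2$, so the left-hand side is $\Omega(n(r-\log n))$ while the right-hand side is $O(n)$ (take $n=1$ and $r$ large to get a ratio of $\Theta(r)$). The paper's own proof hides the same restriction in its last step, where the additive $O(2^r)$ term is absorbed into $O(2^r\log(1+n/2^r))$; this requires $\log(1+n/2^r) = \Omega(1)$, i.e.\ $n = \Omega(2^r)$, which is exactly the regime $2^r \le n$ (up to a constant) in which your geometric-decay argument operates. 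So your restricted statement is what is actually provable, and your proof makes the needed hypothesis explicit where the paper's does not. One caution on your closing remark: in the algorithm the doubling ranges over the non-tree edges incident to a component, and their number can exceed the number of vertices $n_c$, so the assertion that the lemma is only ever invoked with $2^r \le n$ is not immediate from the pseudocode; that, however, is a question about how Lemma~\ref{lem:component-root-dominated} applies this lemma, not a defect in your proof of the lemma itself.
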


\begin{restatable}{lemma}{batchboundisincreasing}\label{lem:batch_bound_is_increasing}
	For any $n \geq 1$, the function $x \log\left(1 + \frac{n}{x}\right)$ is strictly increasing with respect to $x$ for $x \geq 1$.
\end{restatable}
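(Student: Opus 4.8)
The plan is to reduce the claim to a one-variable monotonicity statement and then to a standard logarithmic inequality. First I would note that the paper's $\log$ denotes the base-$2$ logarithm, which differs from the natural logarithm $\ln$ only by the positive constant factor $1/\ln 2$; hence it suffices to prove that $f(x) := x\ln\!\left(1 + \tfrac{n}{x}\right)$ is strictly increasing on $x \ge 1$. In fact the argument I have in mind gives strict monotonicity on all of $x > 0$, which is slightly stronger than what is stated.

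Next I would isolate the dependence on $x$ by the substitution $u = n/x$. Since $n \ge 1$, the map $x \mapsto u$ is a strictly decreasing bijection, and $f(x) = n\cdot\frac{\ln(1+u)}{u}$. Therefore $f$ is strictly increasing in $x$ if and only if $h(u) := \frac{\ln(1+u)}{u}$ is strictly \emph{decreasing} in $u > 0$. To prove the latter I would use the integral representation $h(u) = \frac{1}{u}\int_0^u \frac{ds}{1+s} = \int_0^1 \frac{dv}{1+uv}$, obtained by the change of variables $s = uv$; for each fixed $v \in (0,1]$ the integrand $\frac{1}{1+uv}$ is strictly decreasing in $u$, and it is strictly so on a set of positive measure, so $h$ is strictly decreasing. (A routine calculus alternative: $f'(x) = \ln(1+\tfrac{n}{x}) - \tfrac{n}{x+n}$, and writing $t = n/x$ this is positive because $g(t) := \ln(1+t) - \tfrac{t}{1+t}$ has $g(0)=0$ and $g'(t) = \tfrac{t}{(1+t)^2} > 0$ for $t>0$.)

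I do not expect a genuine obstacle here; the only points needing a little care are the reduction past the change of logarithm base, and verifying that the inequalities are \emph{strict} (strict monotonicity of the integrand on a positive-measure set, or strict positivity of $g'$ on $(0,\infty)$), which is exactly what upgrades the conclusion from "non-decreasing" to the "strictly increasing" asserted in the lemma.
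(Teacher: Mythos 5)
Your proposal is correct, and both of your arguments go through. Your parenthetical "routine calculus alternative" is essentially the paper's route: the paper also differentiates, obtaining $\log\left(1+\frac{n}{x}\right) - \frac{n}{n+x}$, and then establishes positivity of this expression; the difference is only in how the key inequality $\frac{n}{n+x} < \log\left(1+\frac{n}{x}\right)$ is proved. The paper derives it from the convexity-type bound $a^{y} \leq 1 + (a-1)y$ for $a \geq 1$, $y \in [0,1]$ (with $a=2$, $y = \frac{n}{n+x}$), followed by the strict comparison $1 + \frac{n}{n+x} < 1 + \frac{n}{x}$ and taking logarithms, whereas you prove the equivalent statement $\ln(1+t) > \frac{t}{1+t}$ by noting $g(t) = \ln(1+t) - \frac{t}{1+t}$ satisfies $g(0)=0$ and $g'(t) = \frac{t}{(1+t)^2} > 0$. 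Your primary route is genuinely different and arguably cleaner: substituting $u = n/x$ and showing $h(u) = \frac{\ln(1+u)}{u} = \int_0^1 \frac{dv}{1+uv}$ is strictly decreasing avoids differentiating the original two-parameter expression altogether, yields strict monotonicity on all of $x > 0$ rather than just $x \geq 1$, and makes the strictness transparent. One further point in your favor: you handle the change of logarithm base explicitly, which matters here, since the derivative of $x \log_2\left(1+\frac{n}{x}\right)$ carries a factor $\frac{1}{\ln 2}$ on the $\frac{n}{n+x}$ term that the paper's displayed derivative elides (its argument is easily repaired exactly by the reduction to natural logarithms that you state up front).
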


\subsection{\batchdynamiccap{} Trees}\label{subsec:etts}
The \batchdynamic{} trees problem is to represent a forest as it
undergoes batches of links, cuts, and connectivity queries.
A \defn{link} operation inserts an edge connecting two trees in the
forest. A \defn{cut} deletes an edge from the forest, breaking one
tree into two trees. A \defn{connectivity} query returns whether two
vertices are connected by a path (in the same tree) in the forest.
The interface is formally defined as follows:

\myparagraph{\batchdynamiccap{} Trees Interface}
\begin{itemize}[leftmargin=*]
  \item \textbf{$\textproc{BatchLink}(\{(u_1, v_1), \ldots, (u_k,
      v_k)\})$} takes a sequence of edges and adds them to the graph
    $G$. The input edges must not create a cycle in $G$.

  \item \textbf{$\textproc{BatchCut}(\{(u_1, v_1), \ldots, (u_k,
      v_k)\})$} takes a sequence of edges and removes them from the
    graph $G$.

  \item \textbf{$\textproc{BatchConnected}(\{(u_1,v_1), \ldots, (u_k,
      v_k)\})$} takes a sequence of tuples representing queries. The
    output is a sequence where the $i$-th entry returns whether vertices
    $u_i$ and $v_i$ are connected by a path in $G$.

  \item \textbf{$\textproc{BatchFindRepr}(\{(x_1, \ldots, x_k\})$}
    takes a sequence of pointers to tree elements. The output is a
    sequence where the $i$-th entry is the \emph{representative}
    (repr) of the tree in which $x_i$ lives. The representative is
    defined so that $\emph{repr}(u) = \emph{repr}(v)$ if and only if
    $u$ and $v$ are in the same tree. Note that representatives are
    invalidated after the tree is modified.
\end{itemize}

\myparagraph{\batchparallelcap{} Euler Tour Trees}
In this paper we make use of a recently developed, parallel solution
to the \batchdynamic{} trees problem, called a \batchparallel{} Euler
tour tree (\batchparallel{} ET-trees)~\cite{tseng2018batch}. The data
structure represents each ET-tree sequence using a concurrent
skip-list, and reduces bulk link, cut, and query operations to bulk
operations on the concurrent skip-list.
Tseng et al.~\cite{tseng2018batch} prove the following theorem on the
efficiency of the \batchparallel{} ET-tree:
\begin{theorem}\label{thm:ett-bounds}
A batch of $k$ links, $k$ cuts, $k$ connectivity queries, or $k$
representative queries over an $n$-vertex forest can be processed in $O(k \log (1 + n/k))$ expected work and $O(\log n)$ depth with high probability.
\end{theorem}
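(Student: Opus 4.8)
The plan is to reduce all four batch operations to a constant number of \emph{batched split} and \emph{batched join} primitives on the sequence structure backing each Euler tour, and to bound those primitives by a union-of-search-paths argument together with Lemma~\ref{lem:component_bounds}. Recall the representation: each tree $T$ of the forest, on $n_T$ vertices, is kept as a sequence of $\Theta(n_T)$ elements (one per vertex, two per edge, for its two directed copies) stored in a concurrent skip list whose nodes get levels from independent geometric coin flips. We also maintain a parallel dictionary (Gil et al.) from edges to their two occurrences and from vertices to their occurrences, so the $\Theta(k)$ identifiers in a batch resolve to skip-list nodes in $O(k)$ work and $O(\log^* k)$ depth w.h.p. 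A \textproc{BatchLink} edge $(u,v)$ reduces to splitting the tour of $u$ at $u$, splitting the tour of $v$ at $v$, inserting the two directed copies, and joining the fragments; a \textproc{BatchCut} edge reduces to $O(1)$ splits and joins at the two occurrences of $(u,v)$; and a \textproc{BatchFindRepr}, hence a \textproc{BatchConnected}, query on $x$ is answered by walking up to the topmost nonempty level of $x$'s list and returning the leftmost element there, which is a canonical representative of the whole sequence. So it suffices to implement, on a skip list of size $m$ with $\ell$ marked nodes: (i) a batched split into $\ell+1$ pieces; (ii) a batched join of $\ell+1$ given pieces in a given order; (iii) a batched ``walk up to the representative'' from the $\ell$ nodes --- each in $O(\ell \log(1+m/\ell))$ expected work and $O(\log m)$ depth w.h.p.

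The technical core is a union-of-search-paths bound for skip lists: for any $\ell$ nodes of an $m$-node skip list, the set of pointers, over all levels, that ``span'' one of the $\ell$ marked positions has expected size $O(\ell \log(1+m/\ell))$, and it can be computed in parallel in $O(\log m)$ depth w.h.p.\ by walking up from the marked nodes while \emph{merging walks whenever they coalesce}. The size bound follows level by level: at level $j$ there are $\Theta(m/2^{j})$ level-$\geq j$ nodes in expectation, each marked position spans one level-$j$ pointer, and two marked positions in the same level-$j$ gap span the same one, so the union contributes $O(\min(\ell,\, m/2^{j}))$ at level $j$; summing over $j$ gives $O(\ell \log(1+m/\ell))$. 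The depth bound follows because a single upward walk has length $O(\log m)$ w.h.p.\ (a sum of $O(\log m)$ geometric variables), so w.h.p.\ every one of the $\ell \le m$ walks finishes in $O(\log m)$ steps, and merging coalescing walks keeps the per-level contention $O(1)$ on average. Granting this, the primitives follow: for (i) we compute the spanning pointers, redirect them at every level to close the $\ell$ cuts, and repair forward pointers within the (now known) union --- all touching only union nodes and embarrassingly parallel --- so the work is the union size and the depth $O(\log m)$; (ii) is symmetric; (iii) is a batch of upward walks followed by an $O(\log m)$-depth scan of the expected-$O(1)$ top level.

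It remains to lift these per-list bounds to the batch. A batch of $k$ operations touches some trees; if tree $i$ has $n_i$ vertices and receives $k_i$ operations --- with $\sum_i n_i \le n$, $\sum_i k_i = O(k)$, and untouched trees contributing $0$ --- the work on tree $i$ is $O(k_i \log(1+n_i/k_i))$, which sums to $O(k \log(1+n/k))$ by Lemma~\ref{lem:component_bounds}, while the depth is the maximum over trees, $O(\log n)$ w.h.p. For \textproc{BatchCut} and for queries this is immediate after a semisort grouping endpoints by tree. \textproc{BatchLink} has one extra step: since the $k$ inserted edges create no cycle, they induce a forest on the affected trees, and the $\Theta(k)$ tour fragments obtained by cutting each tour at its incident link endpoints must be reassembled in the order given by an Euler tour of that forest; we get this order from a parallel list-ranking / Euler-tour computation on the link forest in $O(k)$ work and $O(\log k)$ depth w.h.p.\ (not dominating), then do one batched split and one batched join, the join side of the accounting again being an instance of Lemma~\ref{lem:component_bounds} with the fragments as ``operations'' and the merged trees (total size $\le n$) as ``components.''

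I expect the main obstacle to be making the union-of-search-paths bound, and the accompanying $O(\log m)$ depth, fully rigorous for the concurrent skip list. The size calculation needs the split between the $\approx \log(m/\ell)$ ``high'' levels, where the $\ell$ walks have mostly coalesced and cost $O(\log(m/\ell))$ per marked node, and the $\approx \log \ell$ ``low'' levels, where the walks are essentially disjoint and cost $O(1)$ per node per level; and the depth needs Chernoff-type concentration so that \emph{every} walk is short, while the surgery must merge coalescing walks rather than running $\ell$ independent searches --- otherwise the work degrades to $O(\ell \log m)$ and per-level contention can reach $\Theta(\log m)$. Everything else --- the dictionary bookkeeping, the $O(1)$-splits-and-joins reductions, the list ranking for \textproc{BatchLink}, and the invocations of Lemma~\ref{lem:component_bounds} --- is routine.
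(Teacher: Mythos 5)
The paper does not actually prove this statement: Theorem~\ref{thm:ett-bounds} is imported as a black box from Tseng et al.~\cite{tseng2018batch}, with the only in-paper justification being the citation and the one-line description of the data structure (Euler tours stored in concurrent skip lists, with bulk link/cut/query reduced to bulk skip-list operations). So there is no in-paper argument to compare against line by line; what you have written is essentially a reconstruction of the cited work's own analysis. Your reduction matches theirs in structure --- resolve edge/vertex identifiers through a dictionary, turn each link or cut into $O(1)$ splits and joins of tour sequences, answer \textproc{BatchFindRepr} (and hence \textproc{BatchConnected}) by walking up to a canonical top-level element, and reassemble fragments for \textproc{BatchLink} via an Euler tour of the link forest --- and your technical core, the union-of-search-paths bound of expected size $O\left(\ell \log\left(1 + m/\ell\right)\right)$ with walks merged when they coalesce, is exactly the ingredient that makes the per-tree cost $O\left(k_i \log\left(1 + n_i/k_i\right)\right)$ rather than $O(k_i \log n)$; combining across trees with Lemma~\ref{lem:component_bounds} is the same step this paper uses repeatedly elsewhere. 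You are also right to flag where the real work lies: the coalescing-walk bound and the $O(\log m)$-depth, low-contention batched split/join surgery on a concurrent skip list are precisely the parts that Tseng et al.\ devote their paper to, and your sketch asserts rather than proves them (in particular, the claim that per-level contention stays $O(1)$ and that the repair of pointers after $\ell$ simultaneous cuts touches only the union of search paths needs a careful argument in the concurrent setting). As a proof sketch of the cited theorem your proposal is sound and takes the intended route; as a replacement for the citation it would still need those skip-list lemmas made rigorous.
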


\noindent The trees also support augmentation with an associative and
commutative function $f: D^2 \to D$ with values from $D$ assigned to
vertices and edges of the forest. The goal is to compute $f$ over
subtrees of the represented forest. The interface can be easily
extended with the following \batchparallel{} primitives for updating
and querying augmented values.

Appendix~\ref{sec:extra-tree-ops} contains information about
additional tree operations that are needed to efficiently implement
our algorithms.


\subsection{The sequential (HDT) algorithm}\label{sec:hdt}

Our parallel algorithm is based on the sequential algorithm of Holm,
De Lichtenberg, and Thorup \cite{holm2001poly}, which we refer to as
the HDT algorithm. The HDT algorithm assigns to each edge in the
graph, an integer \textit{level} from $1$ to $\log n$. The levels
correspond to sequence of subgraphs $G_1 \subset G_2 \subset ...
\subset G_{\log n} = G$, such that $G_i$ contains all edges with
level at most $i$. The algorithm also maintains a spanning forest
$F_i$ of each $G_i$ such that $F_1 \subset F_2 \subset ... \subset
F_{\log n}$. Each forest is maintained using a set of augmented
ET-trees which we describe shortly. Throughout the algorithm, the
following invariants are maintained.

\begin{invariant}\label{inv:component_sizes}
$\forall i = 1 ...  \log n$, the connected components of $G_i$ have
size at most $2^i$.
\end{invariant}

\begin{invariant}\label{inv:minimum_forest}
$F_{\log n}$ is a minimum spanning forest where the weight of each
edge is its level.
\end{invariant}

\myparagraph{Connectivity Queries}
To perform a connectivity query in $G$, it suffices to query
$F_{\log n}$, which takes $O(\log n)$ time by querying for the root
of each Euler tour tree and returning whether the roots are equal. We
note that in \cite{holm2001poly}, a query time of
$O\left(\log n /\log\log n\right)$ is achieved by storing the
Euler tour of $F_{\log n}$ in a B-tree with branching factor
$\log n$.

\myparagraph{Inserting an Edge}
An edge insertion is handled by assigning the edge to level $\log n$.
If the edge connects two currently disconnected components, then it is
added to $F_{\log n}$.

\myparagraph{Deleting an Edge}
Deletion is the most interesting part of the algorithm. If the deleted
edge is not in the spanning forest $F_{\log n}$, the algorithm
removes the edge and does nothing to $F_{\log n}$ as the connectivity
structure of the graph is unchanged.  Otherwise, the component
containing the edge is split into two. The goal is to find a
\defn{replacement edge}, that is, an edge crossing the split
component.

If the deleted edge had level $i$, then the \emph{smaller} of the
two resulting components is searched starting at level $i$ in order to
locate a replacement edge. Before searching this component, all tree
edges whose level is equal to $i$ have their level decremented by one.
As the smaller of the split components at
level $i$ has size $\leq 2^{i-1}$, pushing the entire component to
level $i-1$ does not violate Invariant~\ref{inv:component_sizes}.
Next, the non-tree edges at level $i$ are considered one at a time as
possible replacement edges. Each time the algorithm examines an
edge that is not a replacement edge, it decreases the level of the
edge by one. If no replacement is found, it moves up to the next level
and repeats. Note that because the algorithm first pushes all tree
edges to level $i-1$, any subsequent non-tree edges that may be pushed
from level $i$ to level $i-1$ will not violate
Invariant~\ref{inv:minimum_forest}.

\myparagraph{Implementation and Cost}
To efficiently search for replacement edges, the ET-trees are
augmented with two additional pieces of information. The first
augmentation is to maintain the number of non-tree edges whose level
equals the level of the tree.  The second augmentation maintains the
number of tree-edges whose level is equal to the level of the tree.

Using these augmentations, each successive non-tree edge (or tree
edge) whose level is equal to the level of the tree can be found in
$O(\log n)$ time. Furthermore, checking whether the edge is a
replacement edge can be done in $O(\log n)$ time. Lastly, the cost of
pushing an edge that is not a replacement edge to the lower level is
$O(\log n)$, since it corresponds to inserting the edge into an
adjacency structure and updating the augmented values. Since each edge
can be processed at most once per level, paying a cost of $O(\log n)$,
and there are $\log n$ levels, the overall amortized cost per edge is
$O(\log^2 n)$.

\section{A Parallel Algorithm}\label{sec:simple-parallel}

In this section, we give a simple parallel \batchdynamic{}
connectivity algorithm based on the HDT algorithm. The underlying
invariants maintained by our parallel algorithm are identical to the
sequential HDT algorithm: we maintain $\log n$ levels of spanning
forests subject to Invariants~\ref{inv:component_sizes}
and~\ref{inv:minimum_forest}.  The main challenge, and where our
algorithm departs from the HDT algorithm is in how we search for
replacement edges in parallel, and how we search multiple components
in parallel. We show by a charging argument that this parallel
algorithm is work-efficient with respect to the HDT algorithm---it
performs $O(\log^2 n)$ amortized work per edge insertion or deletion.
Furthermore, we show that the depth of this algorithm is $O(\log^4
n)$. Although these bounds are subsumed by the improved parallel
algorithm we describe in Section~\ref{sec:interleaved-parallel}, the
parallel algorithm in this section is useful to illustrate the main
ideas in this paper.




\myparagraph{Data Structures}
Each spanning forest, $F_i$, is represented using a set of \batchparallel{}
ET-trees~\cite{tseng2018batch}. We represent the edges of the
graph in a parallel dictionary $E_{D}$ for convenience (see
Section~\ref{sec:prelims}). We also store an adjacency array,
$A_{i}[u]$, at each level $i$, and for each vertex $u$ to store the
tree and non-tree edges incident on $u$ with level $i$. Note that
tree and non-tree edges are stored separately so that they can be
accessed separately. The adjacency arrays support batch insertion
and deletion of edges, as well as the ability to fetch a batch of
edges of a desired size. These operations have the following cost bounds.

\begin{restatable}{lemma}{adjacencyops}\label{lem:adjacencyops}
	\textproc{InsertEdges},
	\textproc{DeleteEdges}, and \textproc{FetchEdges}
	can be implemented in $O(1)$ amortized work per edge and in $O(\log n)$
	depth.
\end{restatable}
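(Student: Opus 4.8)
The plan is to implement each adjacency array $A_i[u]$ as a dynamically-sized array (a standard doubling array) that stores the edges of level $i$ incident on $u$, with tree and non-tree edges kept in two separate such arrays. For \textproc{InsertEdges}, given a batch of edges to add, I would first semisort the batch by the lower endpoint (breaking ties arbitrarily) so that all edges destined for the same vertex's array are contiguous; this costs $O(1)$ expected work per edge and $O(\log n)$ depth by the semisort bound from Section~\ref{sec:prelims}. Then, in parallel over the distinct vertices, append the corresponding block of new edges to that vertex's doubling array, resizing (reallocating and copying) when capacity is exceeded. The usual amortized analysis of doubling arrays gives $O(1)$ amortized work per inserted element, and a single append of a block of size $b$ has depth $O(\log b) = O(\log n)$ (to copy in parallel on a resize, and to compute prefix offsets within the block). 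For \textproc{DeleteEdges} I would take the symmetric approach: semisort the batch of edges to remove by endpoint, and for each affected vertex, mark the matching slots and \emph{pack} the array to remove them (using the pack primitive, $O(\text{size})$ work and $O(\log n)$ depth), halving the backing array when it becomes sparsely occupied. Charging the packing cost against the insertions that originally placed those edges (or against the deletions themselves) keeps the amortized work $O(1)$ per edge; locating which slot holds a given edge is handled by keeping, in the parallel dictionary $E_D$, a pointer from each edge to its current position, updated during the pack. For \textproc{FetchEdges}, which needs to return a batch of $\ell$ edges of level $i$ from a specified set of arrays, I would simply read off the first $\min(\ell, |A_i[u]|)$ entries from each relevant array in parallel and concatenate via a prefix-sum over the sizes: $O(\ell)$ work (hence $O(1)$ per fetched edge, which is how the per-edge accounting should be read) and $O(\log n)$ depth.

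The main steps, in order: (1) fix the doubling-array representation and the invariant that $E_D$ stores each edge's current (array, index) location; (2) analyze \textproc{InsertEdges} via semisort + parallel block-append + amortized doubling; (3) analyze \textproc{DeleteEdges} via semisort + parallel mark + \textproc{pack} + amortized halving, with the index-pointers in $E_D$ repaired during the pack; (4) analyze \textproc{FetchEdges} via a parallel read and prefix-sum concatenation. In every case the depth is dominated by the semisort ($O(\log n)$ w.h.p.) or by the logarithmic-depth array copy/pack/prefix-sum, giving the claimed $O(\log n)$ depth, and the work is $O(1)$ amortized per edge touched by the operation.

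The part requiring the most care is the amortized accounting across the three operations simultaneously: a single doubling array receives interleaved batches of insertions and deletions over the lifetime of the algorithm, so I need the standard potential-function argument (potential proportional to the gap between the number of stored elements and half the backing-array capacity) to show that the reallocation-and-copy cost incurred on any resize is fully paid for by the $\Omega(\text{capacity})$ insert or delete operations performed since the previous resize — and that this charging is consistent with charging the copy-in cost on a resize triggered by \textproc{InsertEdges} to the new elements being inserted at that moment. I also need to confirm that maintaining the edge$\to$(array, index) pointers in $E_D$ during a pack does not inflate the depth: the pack already computes, for each surviving element, its new index via a prefix sum, so issuing a batch update to $E_D$ with these new positions costs $O(\text{size})$ work and $O(\log^* n)$ depth by the parallel-dictionary bound, which is absorbed. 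Everything else is routine.
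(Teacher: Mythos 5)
Your treatment of \textproc{InsertEdges} and \textproc{FetchEdges} matches the paper's (resizable arrays with appends, standard doubling amortization, read off a prefix), but your \textproc{DeleteEdges} step has a genuine gap. You propose to mark the slots of the deleted edges and then \emph{pack the whole backing array}, which costs work proportional to the array's current occupancy, not to the number of edges being deleted. The surviving elements are re-copied on every deletion batch that touches that vertex, and no charging scheme of the kind you describe can absorb this: an element is inserted only once but may be packed over arbitrarily many times. Concretely, insert $n$ edges incident to one vertex and then delete them in $n$ singleton batches; each batch packs an array of size $\Theta(n)$, for $\Theta(n^2)$ total work, i.e.\ $\Theta(n)$ amortized per edge rather than $O(1)$. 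The potential function you invoke (stored elements versus half the capacity) only pays for the reallocation cost of doubling/halving resizes; it does not pay for repeatedly touching live elements during full-array packs.

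The paper avoids this by making deletion touch only $O(l)$ slots for a batch of $l$ deletions: each edge stores its current positions in its endpoints' arrays, so the slots are located in $O(1)$ each; the algorithm compacts only the \emph{last} $l$ slots of the array (removing any to-be-deleted edges that happen to lie there), then swaps the remaining $l'$ doomed elements with elements taken from the (now clean) tail and truncates, fixing the moved elements' stored positions as it goes. This is $O(1)$ work per deleted edge outright, with amortization needed only for resizing. Your approach can be repaired either by adopting this swap-with-the-end scheme, or by switching to lazy deletion (tombstone the slots in $O(1)$ each and rebuild/pack only when a constant fraction of the array is dead, charging the rebuild to the tombstoned elements); as written, though, the claimed $O(1)$ amortized bound for deletions does not hold.
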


\noindent See Appendix~\ref{sec:datastructures} for proofs and full details on the adjacency
data structure.


\subsection{Connectivity Queries}\label{subsec:batch-query}

As in the sequential algorithm, a connectivity query can be answered
by simply performing a query on $F_{\log n}$.
Algorithm~\ref{alg:batch-query} gives pseudocode for the batch
connectivity algorithm. The bound we achieve follows from the batch
bounds on \batchparallel{} ET-trees.

\begin{algorithm}[H]
\caption{The batch query algorithm}\label{alg:batch-query}
\small
\begin{algorithmic}[1]
\Procedure{BatchQuery}{$\{ (u_1, v_1), (u_2, v_2), ..., (u_k, v_k)  \}$}
  \State \textbf{return} $F_{{\log n}}$.\textsc{BatchQuery}($\{ (u_1, v_1), (u_2, v_2), ..., (u_k, v_k)  \}$)
\EndProcedure
\end{algorithmic}
\end{algorithm}

\begin{theorem}\label{thm:connectivity-queries}
A batch of $k$ connectivity queries can be processed in $O\left(k
  \log\left(1 + \frac{n}{k}\right)\right)$ expected work and $O(\log n)$ depth w.h.p.
\end{theorem}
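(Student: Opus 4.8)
The plan is to observe that the batch query routine (Algorithm~\ref{alg:batch-query}) does nothing but forward the $k$ query pairs directly to the \batchparallel{} ET-tree maintaining the top-level spanning forest $F_{\log n}$, so the theorem will follow immediately from the cost bound for batch connectivity queries on \batchparallel{} ET-trees (Theorem~\ref{thm:ett-bounds}).

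First I would establish correctness. By Invariant~\ref{inv:minimum_forest}, $F_{\log n}$ is a spanning forest of $G_{\log n} = G$, so two vertices $u_i$ and $v_i$ lie in the same connected component of $G$ if and only if they lie in the same tree of $F_{\log n}$. Hence the $i$-th entry returned by $F_{\log n}.\textsc{BatchQuery}$ is exactly the answer to the $i$-th query, which is what the interface requires.

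Next I would bound the cost. The forest $F_{\log n}$ has $n$ vertices, so applying Theorem~\ref{thm:ett-bounds} to a batch of $k$ connectivity queries over this forest gives $O(k \log(1 + n/k))$ expected work and $O(\log n)$ depth w.h.p., matching the claimed bounds verbatim.

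There is essentially no obstacle here: all of the difficulty is encapsulated in Theorem~\ref{thm:ett-bounds}, which we take as given from~\cite{tseng2018batch}. The only minor point worth noting is that the bound $k \log(1 + n/k)$ behaves sensibly across the whole range of $k$ (degrading gracefully to $O(n)$ as $k \to n$ and to $O(k \log n)$ as $k \ll n$), and that its concavity (Lemma~\ref{lem:component_bounds}) is what would allow such per-batch costs to compose over multiple trees or components; but since Theorem~\ref{thm:ett-bounds} is already phrased for a batch of $k$ queries against the entire $n$-vertex forest, we do not even need to invoke that composition in this proof.
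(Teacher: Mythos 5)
Your proposal is correct and matches the paper's argument, which likewise proves the theorem by noting that \textsc{BatchQuery} simply forwards the $k$ queries to $F_{\log n}$ and invoking Theorem~\ref{thm:ett-bounds}. Your added remarks on correctness via Invariant~\ref{inv:minimum_forest} and the behavior of the bound are fine but not needed beyond that one-line reduction.
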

\begin{proof}
  Follows from Theorem~\ref{thm:ett-bounds}.
\end{proof}

\subsection{Inserting Batches of Edges}\label{subsec:batch-insert}
To perform a batch insertion, we first determine a set of edges in the
batch that increase the connectivity of the graph. To do so, we treat
each current connected component of the graph as a vertex, and build a
spanning forest of the edges being inserted over this contracted
graph. The edges in the resulting spanning forest are then inserted
into the topmost level in parallel.

\begin{algorithm}[H]
\caption{The batch insertion algorithm}
\label{alg:batch-insertion}
\small
\begin{algorithmic}[1]
\Procedure{BatchInsert}{ $U = \{(u_1, v_1), \ldots, (u_k, v_k)  \}$ }
  \State For all $e_i \in U$, set $l(e_i) \assign \log n$ in parallel\label{bi:setlevel}
  \State Update $A_{\log n}[u]$ for edges incident on $u$ \label{bi:insertedges}
  \State $R \assign \{ (F_{\log n}.\textsc{FindRepr}(u), F_{\log n}.\textsc{FindRepr}(u))\ |\ (u,v) \in U\}$\label{bi:batchfindflogn}
  \State $T' \assign$ \textsc{SpanningForest}($R$)\label{bi:insertspanningforest}
  \State $T \assign$ edges in $U$ corresponding to $T'$
  \State Promote edges in $T$ to tree edges\label{bi:insertpromote}
  \State $F_{\log n}$.\textsc{BatchInsert}(T)\label{bi:inserttoflogn}
\EndProcedure
\end{algorithmic}
\end{algorithm}

\noindent Algorithm~\ref{alg:batch-insertion} gives pseudocode for the batch
insertion algorithm. We assume that the edges given as input in $U$
are not present in the graph. Each vertex $u$ that receives an
updated edge inserts its edges into $A_{\log n}[u]$
(Line~\ref{bi:insertedges}). This step can be implemented by first
running a semisort to collect all edges incident on
$u$.

The last step is to insert edges that increase the connectivity of
the graph as tree edges
(Lines~\ref{bi:batchfindflogn}--\ref{bi:inserttoflogn}). The
algorithm starts by computing the representatives for each edge
(Line~\ref{bi:batchfindflogn}). The output is an array of edges,
$R$, which maps each original $(u,v)$ edge in $U$ to
$(\textsc{FindRepr}(u), \textsc{FindRepr}(v))$ (note that these calls
can be batched using \textsc{BatchFindRepr}). Next, it computes a
spanning forest over the tree edges
(Line~\ref{bi:insertspanningforest}).  Finally, the algorithm promotes
the corresponding edges in $U$ to tree edges. This is done by updating
the appropriate adjacency lists and inserting them into $F_{\log n}$
(Lines~\ref{bi:insertpromote}--\ref{bi:inserttoflogn}).

\begin{theorem}\label{thm:insertions}
A batch of $k$ edge insertions can be processed in $O\left(k
  \log\left(1 + \frac{n}{k}\right)\right)$ expected work and
$O(\log n)$ depth w.h.p.
\end{theorem}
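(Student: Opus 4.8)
The plan is to bound the work and depth of each line in Algorithm~\ref{alg:batch-insertion} and then sum. The algorithm is essentially a constant number of calls to subroutines whose costs we have already catalogued: setting levels (a parallel loop), updating adjacency arrays, a batch of \textsc{FindRepr} queries, a static spanning forest computation, promoting edges, and a batch link into $F_{\log n}$. So the proof is really an accounting exercise rather than a new argument.

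First I would handle the cheap lines. Line~\ref{bi:setlevel} is a parallel map over $k$ edges, costing $O(k)$ work and $O(1)$ depth (or $O(\log k)$ if we need a parallel loop, which is subsumed). Line~\ref{bi:insertedges} uses a semisort to gather edges by endpoint and then calls \textproc{InsertEdges} on the adjacency arrays; by the semisort bound this is $O(k)$ expected work and $O(\log n)$ depth w.h.p., and by Lemma~\ref{lem:adjacencyops} the insertion itself is $O(1)$ amortized work per edge and $O(\log n)$ depth. Line~\ref{bi:batchfindflogn} is a batch of $O(k)$ representative queries on $F_{\log n}$, which by Theorem~\ref{thm:ett-bounds} costs $O(k\log(1 + n/k))$ expected work and $O(\log n)$ depth w.h.p. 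The promotion step (Line~\ref{bi:insertpromote}) moves at most $k$ edges between the tree/non-tree adjacency structures, again $O(1)$ amortized work per edge and $O(\log n)$ depth by Lemma~\ref{lem:adjacencyops}. Finally, Line~\ref{bi:inserttoflogn} is a \textproc{BatchLink} of at most $k$ edges into $F_{\log n}$, costing $O(k\log(1+n/k))$ expected work and $O(\log n)$ depth w.h.p.\ by Theorem~\ref{thm:ett-bounds}; correctness of this link (no cycles) follows because we only link the spanning-forest edges $T'$ computed over the contracted graph $R$.

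The one line that needs slightly more care is Line~\ref{bi:insertspanningforest}, the static \textsc{SpanningForest} call on the edge set $R$. Here I would invoke a known work-efficient parallel connectivity/spanning-forest algorithm running in $O(m' + n')$ expected work and $O(\log n)$ depth w.h.p.\ on a graph with $n'$ vertices and $m'$ edges (e.g.\ a parallel \Boruvka{}-style or random-mate algorithm from the cited literature). Since $R$ has at most $k$ edges, the vertices that appear are at most $2k$ of them, so this costs $O(k)$ expected work and $O(\log n)$ depth w.h.p.; one should note that we run the algorithm only on the subgraph induced by endpoints appearing in $R$, not on all $n$ vertices, which is what keeps the work at $O(k)$ rather than $O(n)$. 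Summing all lines, the total work is $O(k\log(1+n/k))$ in expectation and the depth is $O(\log n)$ w.h.p., as claimed. The main (mild) obstacle is just being careful that every subroutine is charged against the $k$ edges of the batch and not against $n$, and citing an appropriate static parallel spanning-forest result with the stated bounds; nothing here requires the amortization machinery used later for deletions.
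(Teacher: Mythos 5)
Your proposal is correct and follows essentially the same route as the paper's proof: a line-by-line accounting charging the adjacency-array updates to Lemma~\ref{lem:adjacencyops}, the \textsc{BatchFindRepr} and \textsc{BatchLink} calls to Theorem~\ref{thm:ett-bounds}, and the static \textsc{SpanningForest} step to a linear-work, logarithmic-depth parallel connectivity algorithm (the paper cites Gazit's algorithm for this). The only difference is cosmetic --- you leave the spanning-forest subroutine as a generic citation and note explicitly that it runs on the $O(k)$-size contracted graph, which the paper treats implicitly.
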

\begin{proof}
  Lines~\ref{bi:setlevel}--\ref{bi:insertedges} cost $O(k)$ work and
  $O(\log k)$ depth w.h.p.\ using our bounds for updating $A$ (see
  Lemma~\ref{lem:adjacencyops}).
The find representative queries (Line~\ref{bi:batchfindflogn}) can be
implemented using a \textsc{BatchFindRepr} call, which
costs $O\left(k \log\left(1 + \frac{n}{k}\right)\right)$ expected work
and $O(\log n)$ depth w.h.p.\ by Theorem~\ref{thm:ett-bounds}.
Computing a spanning forest (Line~\ref{bi:insertspanningforest}) can
be done in $O(k)$ expected work and $O(\log k)$ depth w.h.p.\ using
Gazit's connectivity algorithm \cite{gazit1991optimal}. Finally,
updating the adjacency lists and inserting the spanning forest edges
into $F_{\log n}$ costs $O\left(k \log\left(1 +
    \frac{n}{k}\right)\right)$ expected work and $O(\log n)$ depth
w.h.p. (Lines~\ref{bi:insertpromote}--\ref{bi:inserttoflogn}).
\end{proof}

\subsection{Deleting Batches of Edges}
As in the sequential HDT algorithm, searching for replacement edges
after deleting a batch of tree edges is the most interesting part of
our parallel algorithm. A natural idea for parallelizing the HDT
algorithm is to simply scan all non-tree edges incident on each
disconnected component in parallel.
Although this approach has low depth per level, it may examine a huge
number of candidate edges, but only push down a few non-replacement
edges. In general, it is unable to amortize the work performed
checking all canidates edges at a level to the edges that experience
level decreases.  To amortize the work properly while also searching
the edges in parallel we must perform a more careful exploration of
the non-tree edges. Our approach is to use a \emph{doubling}
technique, in which we geometrically increase the number of non-tree
edges explored as long as we have not yet found a replacement edge. We
show that using the doubling technique, the work performed (and number
of non-tree edges explored) is dominated by the work of the last
phase, when we either find a replacement edge, or run out of non-tree
edges. Our amortized work-bounds follow by a per-edge charging
argument, as in the analysis of the HDT algorithm.

\myparagraph{The Deletion Algorithm}
Algorithm~\ref{alg:batch-deletion} shows the pseudocode for our
parallel batch deletion algorithm. As with the batch insertion
algorithm, we assume that each edge is present in $U$ in both
directions. Given a batch of $k$ edge deletions, the algorithm first
deletes the
given edges from their respective adjacency lists in parallel
(Line~\ref{bd:deletea}). It then filters out the tree edges
(Line~\ref{bd:identifytree}) and
deletes each tree edge $e$ from $F_{i} \ldots, F_{\log n}$, where $i$
is the level of $e$ (Line~\ref{bd:deletetree}). Next, it computes $C$,
a set of \emph{components} (representatives) from the deleted tree
edges (Line~\ref{bd:findcomps}). For each deleted tree edge, $e$, the
algorithm includes the representatives of both endpoints in the forest
at $l(e)$, which must be in different components as $e$ is a deleted
tree edge. Finally, the algorithm loops over the levels, starting at
the lowest level where a tree edge was deleted
(Line~\ref{bd:looplevels}), and calls \parlevelsearch{} at each level.
Each call to \parlevelsearch{} takes $i$, the level to search, $C$,
the current set of disconnected components, and $S$, an initially
empty set of replacement edges that the algorithm discovers over the
course of the searches
(Line~\ref{bd:levelsearch})

\begin{algorithm}[]
\caption{The batch deletion algorithm}
\label{alg:batch-deletion}
\small
\begin{algorithmic}[1]
\Procedure{BatchDeletion}{$U = \{e_1, \ldots, e_k\}$}
  \State Delete $e \in U$ from $A_{0}, \ldots, A_{\log n}$\label{bd:deletea}
  \State $T \assign \{ e \in U\ |\ e \in F_{\log n}\}$\label{bd:identifytree} \algcomment{Tree edges to delete}
  \State Delete $e \in T$ from $F_{0}, \ldots, F_{\log n}$\label{bd:deletetree}
  \State $C \assign \cup_{e=(u,v) \in T} (F_{l(e)}.\textsc{FindRepr}(u), F_{l(e)}.\textsc{FindRepr}(v))$\label{bd:findcomps}
  \State $S \assign \emptyset$\label{bd:initS}
  \For{$i \in [min_{l} \assign \min_{e \in T}, \log n]$}\label{bd:looplevels}
    \State $(C, S) \assign$ \parlevelsearch{}($i, C, S$)\label{bd:levelsearch}
  \EndFor
\EndProcedure
\end{algorithmic}
\end{algorithm}

\begin{algorithm}[h!]
\caption{The parallel level search algorithm} \label{alg:simple-level-search}
\small
\begin{algorithmic}[1]
\Procedure{ComponentSearch}{$i, c$}\label{cs:start}
  \State $w \assign 1,\ w_{\max} \assign c.\textsc{NumNonTreeEdges}$\label{cs:init}
  \While {$w \leq w_{\max}$}\label{cs:whilestart}
    \State $w \assign \min(w, w_{\max})$\label{cs:assignw}
    \State $E_c \assign$ First $w$ non-tree edges in $c$\label{cs:assignnontree}
    \State Push all non-replacement edges in $E_c$ to level $i-1$\label{cs:pushnonrepl}
    \If {$E_c$ contains a replacement edge}\label{cs:ifcheck}
      \State \algorithmicreturn{} $\{ r \}$, where $r$ is any
      replacement edge in $E_c$\label{cs:returnr}
    \EndIf
    \State $w \assign 2w$\label{cs:updatew}
  \EndWhile\label{cs:whileend}
  \State \algorithmicreturn{} $\emptyset$\label{cs:returnempty}
\EndProcedure\label{cs:end}
\smallskip
\Procedure{ParallelLevelSearch}{$i$, $L = \{c_{1},c_2, \ldots\}$, $S$}
  \State $F_{i}$.\textsc{BatchInsert}($S$)\label{sls:batchinserts}
  \State $C \assign c \in L$ with size $\leq 2^{i-1}$\label{sls:smallcomponent}
  \State $D \assign c \in L$ with size $> 2^{i-1}$\label{sls:largecomponent}
  \While{$|C| > 0$}\label{sls:mainloopstart}
    \State Push level $i$ tree edges of components in $C$ to level $i-1$\label{sls:pushtreelower}
    \State $R \assign \cup_{c \in C}\ \textsc{ComponentSearch}(i, c)$\label{sls:findreplacements} \algcomment{In parallel}

    \State $R'\assign \{ (F_{i}.\textsc{FindRepr}(u), F_{i}.\textsc{FindRepr}(v))\ |\ (u, v) \in R\}$\label{sls:applyfindrep}

    \State $T' \assign$ \textsc{SpanningForest}$(R')$\label{sls:spanningforest}
    \State $T \assign$ Edges in $R$ corresponding to edges in $T'$\label{sls:newtreeedges}
    \State Promote edges in $T$ to tree edges\label{sls:promotefi}
    \State $F_{i}$.\textsc{BatchInsert}(T)\label{sls:batchinsertfi}
    \State $S \assign S \cup T$\label{sls:updateS}
    \State $C \assign \{ F_{i}.\textsc{Repr}(c)\ |\ c \in C\}$ \label{sls:updateallcomponents}
    \State $Q \assign$ $\{ c \in C$ with no non-tree edges, or size $> 2^{i-1} \}$ \label{sls:setQ}
    \State $D \assign D \cup Q$\label{sls:updateD}
    \State $C \assign C \setminus Q$\label{sls:updateC}
  \EndWhile\label{sls:mainloopend}
  \State \algorithmicreturn{} $(D, S)$\label{sls:return}
\EndProcedure
\end{algorithmic}
\end{algorithm}

\myparagraph{Searching a Level in Parallel}
The bulk of the work done by the deletion algorithm is performed by
Algorithm~\ref{alg:simple-level-search}, which implements a subroutine
that searches the disconnected components at a given level of the data
structure in parallel. The input to \parlevelsearch{} is an integer
$i$, the level to search, a set of representatives of the disconnected
components, $L$, and the set of replacement spanning forest edges that
were found in levels lower than $i$, $S$. The output of
\parlevelsearch{} is the set of components that are still disconnected
after considering the non-tree edges at this level, and the set of
replacement spanning forest edges found so far.

\parlevelsearch{} starts by inserting the new spanning forest edges in $S$ into
$F_{i}$ (Line~\ref{sls:batchinserts}). Next, it computes $C$ and $D$, which are the
components that are active and inactive at this level, respectively
(Lines~\ref{sls:smallcomponent}--\ref{sls:largecomponent}). The main loop
of the algorithm (Lines~\ref{sls:mainloopstart}--\ref{sls:mainloopend})
operates in a number of \defn{rounds}.
Each round first pushes down all tree edges at level $i$ of every
active component. It then finds a single
replacement edge incident to each active component, searching the
active components in parallel, pushing any non-replacement edge to
level $i-1$. It then promotes a maximal acyclic subset of the
replacement edges found in this round to tree edges, and proceeds to
the next round. The rounds terminate once all components at this level
are deactivated by either becoming too large to search at this level,
or because the algorithm finished examining all non-tree edges
incident to the component at this level.

The main loop (Lines~\ref{sls:mainloopstart}--\ref{sls:mainloopend})
works as follows.
The algorithm first pushes any level $i$ tree edges in an active
component down to level $i-1$.  The active components in $C$ have size
at most $2^{i-1}$, meaning that any tree edges they have at level $i$
can be pushed to level $i-1$ (Line~\ref{sls:pushtreelower}) without
violating Invariant~\ref{inv:component_sizes}.
Next, the algorithm searches each active component for a replacement
edge in parallel by calling the \textsc{ComponentSearch} procedure in
parallel over all components (Line~\ref{sls:findreplacements}).  This
procedure either returns an empty set if there are no replacement
edges incident to the component, or a set containing a single
replacement edge. Next, the algorithm maps the replacement edge
endpoints to their current component's representatives by calling
\textsc{FindRepr} on each endpoint (Line~\ref{sls:applyfindrep}). It then computes
a spanning forest over these replacement edges
(Line~\ref{sls:spanningforest}) and maps the edges included in the
spanning forest back to their original endpoints ids
(Line~\ref{sls:newtreeedges}). Observe that the edges in $T$
constitute a maximal acyclic subset of replacement edges of $R$ in
$F_i$. The algorithm therefore promotes the edges in $T$ to tree edges
(Lines~\ref{sls:promotefi}--~\ref{sls:batchinsertfi}).  Note that the new tree edges are
not immediately inserted into all higher level spanning trees.
Instead, the edges are buffered by adding them to $S$
(Line~\ref{sls:updateS}) so that they will be inserted when the higher
level is reached in the search. Finally, the algorithm updates the set
of components by computing their representatives on the updated $F_i$
(Line~\ref{sls:updateallcomponents}), and filtering out any components
which have no remaining non-tree edges, or become larger than
$2^{i-1}$ (i.e., become unsearchable at this level) into $D$
(Lines~\ref{sls:setQ}--\ref{sls:updateC}).

We now describe the \textsc{ComponentSearch} procedure
(Lines~\ref{cs:start}--\ref{cs:end}).  The search consists of a number
of \defn{phases}, where the $i$'th phase
searches the first $2^{i}$ non-tree edges, or all of the non-tree
edges if $2^{i}$ is larger than the number of non-tree edges in $c$.
The search terminates either once a replacement edge incident to $c$
is found (Line~\ref{cs:ifcheck}), or once the algorithm unsuccessfuly
examines all non-tree edges incident to $c$
(Line~\ref{cs:whilestart}). Initially $w$, the search size, is set to
$1$ (Line~\ref{cs:assignw}). On each phase, the algorithm retrieves
the first $w$ many non-tree edges, $E_c$
(Line~\ref{cs:assignnontree}). It pushes all non-tree edges that are
not replacements to level $i-1$ (Line~\ref{cs:pushnonrepl}).
It then checks whether any of the edges
in $E_c$ are a replacement edge, and if so, returns one of the
replacement edges in $E_c$ (Line~\ref{cs:returnr}). Note that checking
whether an edge is a replacement edge is done using
\textsc{BatchFindRepr}. Otherwise, if no replacement edge was found it
doubles $w$ (Line~\ref{cs:updatew}) and continues.

\myparagraph{Cost Bounds}
We now prove that our parallel algorithm has low depth, and is
work-efficient with respect to the sequential HDT algorithm. For
simplicity, we assume that we start with no edges in a graph on $n$
vertices.


\begin{theorem}\label{thm:deletion-depth}
A batch of $k$ edge deletions can be processed in $O(\log^4 n)$ depth w.h.p.
\end{theorem}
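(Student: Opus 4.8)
The plan is to bound the depth by a top-down decomposition of Algorithms~\ref{alg:batch-deletion} and~\ref{alg:simple-level-search}, reducing every step to a batch operation of already-known depth, and then to isolate the one combinatorial ingredient that really needs an argument: the number of rounds executed by \parlevelsearch{}. First I would dispatch the preamble of \textsc{BatchDeletion} (Lines~\ref{bd:deletea}--\ref{bd:findcomps}): it consists of a constant number of batch deletions, cuts, and representative queries on the adjacency arrays $A_0,\dots,A_{\log n}$ and on the \batchparallel{} ET-trees $F_0,\dots,F_{\log n}$, all of which can be issued over the $\log n$ levels simultaneously, costing $O(\log n)$ depth w.h.p.\ by Lemma~\ref{lem:adjacencyops} and Theorem~\ref{thm:ett-bounds}. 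The rest of \textsc{BatchDeletion} is a \emph{sequential} loop over the $O(\log n)$ levels, each iteration calling \parlevelsearch{}; hence it suffices to prove that \parlevelsearch{} runs in $O(\log^3 n)$ depth w.h.p.

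For \parlevelsearch{}, the setup (Lines~\ref{sls:batchinserts}--\ref{sls:largecomponent}) is again $O(\log n)$ batch ET-tree and augmented-value operations. Its main loop runs for some number $\rho$ of rounds, and each round performs: (i) a constant number of batch operations on $F_i$ and the adjacency arrays (pushing down level-$i$ tree edges, \textsc{BatchFindRepr}, \textsc{BatchInsert}), each of depth $O(\log n)$ w.h.p.\ by Theorem~\ref{thm:ett-bounds} and Lemma~\ref{lem:adjacencyops}; (ii) a call to \textsc{SpanningForest}, which is $O(\log n)$ depth w.h.p.\ via Gazit's algorithm~\cite{gazit1991optimal}; and (iii) the parallel invocation of \textsc{ComponentSearch} over the active components. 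A single \textsc{ComponentSearch} is a doubling loop in which the search width $w$ starts at $1$ and doubles until it exceeds $c.\textsc{NumNonTreeEdges} \le \binom{n}{2} < n^2$, hence runs for $O(\log n)$ phases; each phase fetches the first $w$ non-tree edges of a tree (Lemma~\ref{lem:adjacencyops} and Appendix~\ref{sec:extra-tree-ops}), runs a \textsc{BatchFindRepr} on a batch of size at most $w$ to identify replacements, and pushes the non-replacements down --- each of depth $O(\log n)$ w.h.p. Thus \textsc{ComponentSearch} has depth $O(\log^2 n)$, a round has depth $O(\log^2 n)$, and \parlevelsearch{} has depth $O(\rho \log^2 n + \log n)$ w.h.p. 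Since there are only $O(\log^3 n)$ randomized subroutine calls in the entire deletion (polynomially many), a single union bound makes all of these depth guarantees hold simultaneously w.h.p.

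The main obstacle is to show $\rho = O(\log n)$, which I would prove by a \Boruvka{}-style halving argument, for which the enabling structural fact is that a replacement edge discovered at level $i$ has level at most $i$, so its two endpoints lay in a common component of $G_i$ \emph{before} the batch of deletions; that component was split by one of the deleted tree edges, so all of its current pieces are among the tracked set $L = C \cup D$ (a piece outside $L$ could only belong to a $G_i$-component never touched by the deletions, and no level-$i$ edge can cross $G_i$-components). Consequently, in any round $t$, the graph $H_t$ on the trees of $F_i$ whose edges are the replacement edges found that round has the property that every connected component with at least one edge consists solely of tracked pieces. Now each active piece either has \textsc{ComponentSearch} return $\emptyset$ --- which happens only after all of its level-$i$ non-tree edges have been pushed down, so it enters $Q$ and is deactivated --- or has an incident replacement edge. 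A merged piece that survives into $C_{t+1}$ corresponds to a connected component $K$ of $H_t$ that contains at least one piece of $C_t$ and whose contraction has size $\le 2^{i-1}$ and still has a level-$i$ non-tree edge; every tree of $K$ must then be active, since an inactive tree of $K$ incident to an edge of $H_t$ is either too large (forcing the contraction of $K$ to exceed $2^{i-1}$) or has no level-$i$ non-tree edge (hence is isolated in $H_t$, so $K$ would be that single tree and would contain no active piece) --- both contradictions. Since $K$ has an edge, $|K| \ge 2$, and therefore $|C_{t+1}| \le |C_t|/2$; as $|C_1| = O(n)$ (at most $2(n-1)$ endpoints of deleted tree edges), this gives $\rho = O(\log n)$. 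Combining the three parts, \parlevelsearch{} runs in $O(\log^3 n)$ depth and \textsc{BatchDeletion} in $O(\log n) \cdot O(\log^3 n) = O(\log^4 n)$ depth w.h.p.
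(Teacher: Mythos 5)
Your proof is correct and follows essentially the same decomposition as the paper's: $O(\log n)$ levels, $O(\log n)$ rounds per level because the active components at least halve each round, $O(\log n)$ doubling phases per \textsc{ComponentSearch}, and $O(\log n)$ depth w.h.p.\ per phase from the batch ET-tree and adjacency-structure bounds, yielding $O(\log^4 n)$ overall. The only notable difference is that you justify the halving of active components in detail (using the level invariant to rule out replacement edges into untracked or edge-exhausted pieces), whereas the paper simply asserts that, in the worst case, the replacement edges pair the active components off.
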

\begin{proof}

The algorithm doubles the number of edges searched in each phase.
Therefore, after $\log m = O(\log n)$ phases, all non-tree edges
incident on the component will be searched.

In every round, each active component is either deactivated, or has a replacement
edge found. In the worst case, the edges found for each active component pair the components off,
leaving us with half as many active components in the subsequent round. As we
lose a constant fraction of the active components per round, the algorithm
takes $O(\log n)$ rounds.

A given level can therefore perform at most $O(\log^2 n)$ phases.
Each phase consists of fetching,
examining, and pushing down non-tree edges, and hence can be implemented
in $O(\log n)$ depth w.h.p.\ by Lemma~\ref{lem:ettree-fetch-edges},
Theorem~\ref{thm:ett-bounds}, and Lemma~\ref{lem:ettree-decrease-level}.
Therefore, the overall depth for a given level is $O(\log^3 n)$ w.h.p.
As all $\log n$ levels will be processed in the worst case, the
overall depth of the algorithm is $O(\log^4 n)$ w.h.p.
\end{proof}

\noindent We now analyze the work performed by the algorithm. We begin by stating the
following lemmas on the efficiency of the augmented ET-tree operations.

\begin{restatable}{lemma}{ettreefetchedges}\label{lem:ettree-fetch-edges}
	Given some vertex, $v$ in a \batchparallel{} ET-tree, we can fetch
	the first $l$ tree (or non-tree) edges referenced by the augmented
	values in the tree in $O\left(l \log\left(1 +
	\frac{n_c}{l}\right)\right)$ work and $O(\log n)$ depth w.h.p.\ where $n_c$
	is the number of vertices in the ET-tree at the current
	level. Furthermore, removing the edges can be done in the same bounds.
\end{restatable}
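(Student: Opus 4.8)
The plan is to observe that fetching the first $l$ tree (or non-tree) edges referenced by the augmented values is, apart from reading off the edges themselves, a batch of $l$ rank/select queries on the concurrent skip-list that represents the Euler tour of the relevant ET-tree at the current level, and then to bound the cost of this batch by the very same search-path argument that underlies Theorem~\ref{thm:ett-bounds}.

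First I would recall the augmentation. Each element of the Euler-tour sequence of the level-$i$ ET-tree records the number of level-$i$ tree (resp.\ non-tree) edges anchored at it, together with (a pointer to) the list of those edges; the concurrent skip-list maintains the running sums of these counts, so that the total count $N$ is available at the top of the structure and, for any rank $j \le N$, a single guided root-to-leaf descent locates the element anchoring the $j$-th edge in Euler-tour order. To fetch the first $l$ edges I would set $l' := \min(l,N)$, perform the $l'$ select-by-rank queries for ranks $1,2,\ldots,l'$, and then read the requested edges off the anchoring elements' lists, which takes an additional $O(l')$ work since each list element is read at most once.

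Next I would bound the cost of the batch of $l'$ select queries. Because the queried ranks are distinct and monotone, the union of the $l'$ descent paths is a Steiner-tree-like subgraph of the skip-list connecting $l'$ leaves to the root; since the skip-list has $\Theta(n_c)$ elements and $O(\log n_c)$ levels w.h.p., this union has expected size $O(l' \log(1 + n_c/l'))$ by exactly the analysis Tseng et al.~\cite{tseng2018batch} use to bound a batch of $l'$ connectivity or representative queries (near the bottom the paths are disjoint, while across the top $O(\log(1 + n_c/l'))$ levels they touch $O(l')$ elements in total). Descending level by level gives depth $O(\log n_c) = O(\log n)$ w.h.p. Since $l' \le l$, Lemma~\ref{lem:batch_bound_is_increasing} gives $l' \log(1 + n_c/l') \le l \log(1 + n_c/l)$, so the total work, including the $O(l')$ cost of reading off the edges, is $O(l \log(1 + n_c/l))$ in expectation, the case $l' = 0$ being trivial.

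Finally, for the removal claim: deleting the fetched edges from level $i$ amounts to decrementing the anchored-edge count at each of the (at most $l'$) affected elements and refreshing the running sums up their skip-list paths, i.e., a batched augmented-value update confined to the same Steiner-tree-like subgraph traversed above, so it runs in the same $O(l \log(1 + n_c/l))$ expected work and $O(\log n)$ depth w.h.p. The point requiring the most care, and which I regard as the main obstacle, is confirming that the \batchparallel{} ET-tree of Tseng et al.\ supports augmented select queries and augmented-value updates as a single batch within the bounds of Theorem~\ref{thm:ett-bounds}; this follows from their treatment of augmentation, since a select query is just a search guided by the augmented sums and the update is just a batched refresh of augmented values along paths that are already being visited, so neither costs asymptotically more than a plain batch query.
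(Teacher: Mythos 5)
Your proposal is correct and follows essentially the same route as the paper's proof: both locate the anchoring skip-list elements via the augmented counts, bound the traversed nodes by the standard union-of-search-paths (Steiner-tree-like) counting to get $O\left(l \log\left(1 + \frac{n_c}{l}\right)\right)$ expected work and $O(\log n)$ depth w.h.p., read the edges off the adjacency structure at constant amortized cost per edge, and handle removal by refreshing the augmented values along the same paths. The paper merely phrases the search as a single prefix descent to the boundary node followed by a pruned expansion of marked nodes rather than as a batch of $l$ rank-select queries, but the accounting is identical.
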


\begin{restatable}{lemma}{ettreedecreaselevel}\label{lem:ettree-decrease-level}
	Decreasing the level of $l$ tree (or non-tree) edges in a
	\batchparallel{} ET-tree can be performed in $O\left(l \log\left(1 +
	\frac{n_c}{l}\right)\right)$ expected work and $O(\log n)$ depth
	w.h.p.\ where $n_c$ is the number of nodes in the ET-tree at the
	current level.
\end{restatable}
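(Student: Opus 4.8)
The plan is to decompose the decrease-level operation on a batch of $l$ edges into a constant number of batch sub-operations, bound each one by $O(l \log(1 + n_c/l))$ work and $O(\log n)$ depth, and sum the bounds. Suppose the $l$ edges are currently stored at level $i$ in an ET-tree on $n_c$ vertices. Decreasing their level amounts to: (i) deleting the $l$ edges from the level-$i$ adjacency arrays and inserting them into the level-$(i-1)$ adjacency arrays; (ii) updating the augmented counts at the at most $2l$ affected endpoints in $F_i$, decrementing the stored number of level-$i$ edges there; (iii) for tree edges, performing a batch of at most $l$ links in $F_{i-1}$, since an edge whose level drops to $i-1$ must now appear in the level-$(i-1)$ forest; and (iv) incrementing the corresponding augmented counts at the affected endpoints in $F_{i-1}$. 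For non-tree edges, step (iii) is omitted and steps (ii),(iv) modify the non-tree-edge count rather than the tree-edge count, so the non-tree case is dominated by the tree case and it suffices to analyze the latter.

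First I would bound (i) using Lemma~\ref{lem:adjacencyops}: a batch of $l$ adjacency-array insertions and deletions costs $O(l)$ amortized work and $O(\log n)$ depth, and $O(l) \le O(l \log(1 + n_c/l))$. Steps (ii) and (iv) use the batch augmented-value update primitive of the extended \batchparallel{} ET-tree interface: updating a single augmented value propagates along one skip-list path, and a batch of $O(l)$ such updates confined to trees of total size at most $n_c$ runs in $O(l \log(1 + n_c/l))$ expected work and $O(\log n)$ depth w.h.p., by the same skip-list analysis that underlies Theorem~\ref{thm:ett-bounds}. Step (iii) is a batch of at most $l$ links; the endpoints of these edges all lie among the $n_c$ vertices of the ET-tree being modified, which in $F_{i-1}$ form a sub-forest of total size at most $n_c$. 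Applying the batch-link bound within each affected ET-tree at level $i-1$ and aggregating the per-tree costs via Lemma~\ref{lem:component_bounds} (taking $l_j$ to be the number of links landing in the $j$-th tree and $n_j$ its size, so $\sum l_j \le l$ and $\sum n_j \le n_c$) yields a total of $O(l \log(1 + n_c/l))$ expected work and $O(\log n)$ depth w.h.p.\ for step (iii).

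Summing the four steps, the total work is $O(l \log(1 + n_c/l))$ in expectation and the total depth is $O(\log n)$, the depth and the randomized skip-list/semisort guarantees holding w.h.p.\ after a union bound over the constant number of sub-operations. The step I expect to be the main obstacle is the accounting for (iii): naively invoking a global batch-link bound would only give $O(l \log(1 + n/l))$ with the full forest size $n$, so the crux is that the affected trees are contained in a single ET-tree of size $n_c$ together with the superadditivity inequality of Lemma~\ref{lem:component_bounds}, which lets the per-tree costs be aggregated without loss. A secondary subtlety worth noting is that decreasing the level of a tree edge is a genuine structural modification of $F_{i-1}$ (a link), not mere bookkeeping, which is why Theorem~\ref{thm:ett-bounds} rather than just an augmentation update is needed — but this is exactly the link bound already relied upon elsewhere in the paper.
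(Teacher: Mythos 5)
Your proposal is correct and follows essentially the same route as the paper: the paper's proof also reduces the operation to the fetch/remove machinery of Lemma~\ref{lem:ettree-fetch-edges} plus re-insertion and augmented-value updates at the lower level, with the key observation (which you also make, via the sub-forest containment) that the affected lower-level forest is a subgraph of the current tree and hence has at most $n_c$ vertices, giving the $O\left(l \log\left(1 + \frac{n_c}{l}\right)\right)$ bound. Your write-up is merely more explicit about the batch-link step for tree edges and the per-tree aggregation via Lemma~\ref{lem:component_bounds}, which the paper leaves implicit.
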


\noindent The proofs of these lemmas are provided in Appendix~\ref{sec:extra-tree-ops}. We now have the tool to analyze the work performed by
batch deletion.

\begin{lemma}\label{lem:batch_delete_exclude_search}
  The work performed by \textsc{BatchDeletion} excluding the calls to
  \textsc{ParallelLevelSearch} is
	\begin{equation}
	O\left(k\log n\log\left(1 + \frac{n}{k}\right)\right),
	\end{equation}
  in expectation.
\end{lemma}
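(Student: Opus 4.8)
The plan is to bound the work of \textsc{BatchDeletion} (Algorithm~\ref{alg:batch-deletion}) line by line, excluding Line~\ref{bd:levelsearch}, and to show that every remaining line is dominated by the cost of the tree-edge deletions on Line~\ref{bd:deletetree}.

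First I would dispose of the cheap steps. Line~\ref{bd:deletea} performs $k$ edge deletions on the adjacency arrays; since each edge resides at a single level, Lemma~\ref{lem:adjacencyops} gives $O(k)$ amortized work (after partitioning $U$ by level with a semisort, or at worst $O(k \log n)$ if one probes every level), comfortably within the claimed bound. Line~\ref{bd:identifytree} filters the tree edges out of $U$ by batch-querying the parallel dictionary $E_D$, which records each edge's level and tree/non-tree status; this costs $O(k)$ work by the bounds of Gil et al.~\cite{gil1991towards}. The loop on Lines~\ref{bd:looplevels}--\ref{bd:levelsearch} contributes only $O(\log n)$ scheduling work on top of the excluded calls to \parlevelsearch{}. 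It remains to handle Lines~\ref{bd:deletetree} and~\ref{bd:findcomps}.

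The main step is Line~\ref{bd:deletetree}, which removes every tree edge $e \in T$ from the forests $F_{l(e)}, F_{l(e)+1}, \ldots, F_{\log n}$. Let $k_j$ be the number of edges of $T$ whose level is exactly $j$, so $\sum_j k_j = |T| \leq k$, and let $K_j = \sum_{j' \leq j} k_{j'}$ be the number of edges actually cut from $F_j$. Since $K_j \leq k$, Theorem~\ref{thm:ett-bounds} bounds the work of these cuts on $F_j$ by $O\left(K_j \log\left(1 + \frac{n}{K_j}\right)\right)$, and the monotonicity of $x \log\left(1 + \frac{n}{x}\right)$ from Lemma~\ref{lem:batch_bound_is_increasing} makes this at most $O\left(k \log\left(1 + \frac{n}{k}\right)\right)$; summing over the at most $\log n$ levels at which cuts occur gives $O\left(k \log n \log\left(1 + \frac{n}{k}\right)\right)$ expected work. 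For Line~\ref{bd:findcomps} the \textsc{FindRepr} queries split by level, since we issue $2 k_j$ representative queries on $F_j$; Theorem~\ref{thm:ett-bounds} bounds this by $O\left(k_j \log\left(1 + \frac{n}{k_j}\right)\right)$ per level, and Lemma~\ref{lem:component_bounds} applied with the counts $k_j$ bounds the total by $O\left(k \log\left(1 + \frac{n}{k}\right)\right)$, which is absorbed into the Line~\ref{bd:deletetree} term. Adding all contributions yields $O\left(k \log n \log\left(1 + \frac{n}{k}\right)\right)$ expected work.

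The only genuine obstacle is charging Line~\ref{bd:deletetree} correctly: a single tree edge may be cut from as many as $\log n$ distinct ET-trees, which at first glance threatens a $\log^2 n$ blowup, but because each forest $F_j$ processes a batch of at most $k$ cuts and the batch cost $x \log\left(1 + \frac{n}{x}\right)$ is monotone in the batch size, the per-forest cost never exceeds $O\left(k \log\left(1 + \frac{n}{k}\right)\right)$, leaving just one extra $\log n$ factor overall. Every other line is a routine application of the parallel primitives from Section~\ref{sec:prelims} and the ET-tree bounds of Theorem~\ref{thm:ett-bounds}.
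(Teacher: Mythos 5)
Your proposal is correct and follows essentially the same route as the paper: a line-by-line accounting in which the tree-edge deletions dominate, with the per-forest cut cost bounded by $O\left(k\log\left(1+\frac{n}{k}\right)\right)$ via Theorem~\ref{thm:ett-bounds} and the monotonicity of $x\log\left(1+\frac{n}{x}\right)$ (Lemma~\ref{lem:batch_bound_is_increasing}), multiplied by the at most $\log n$ levels from which each tree edge must be removed. Your treatment of Line~\ref{bd:findcomps} is marginally more careful than the paper's (splitting the representative queries by level and invoking Lemma~\ref{lem:component_bounds} rather than treating them as one \textsc{BatchFindRepr} call), but this is a refinement of the same argument, not a different one.
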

\begin{proof}
  The edge deletions performed by Line~\ref{bd:deletea} cost $O(k)$ work by
  Lemma~\ref{lem:adjacencyops}.
  Filtering the tree edges (Line~\ref{bd:identifytree}) can be done in
  $O(k)$ work. Deleting the tree edges costs at most
  $O\left(k \log\left(1 + {n}/{k}\right)\right)$ work by
  Lemma~\ref{lem:batch_bound_is_increasing} (Line~\ref{bd:deletetree}).

  Line~\ref{bd:findcomps} perform a \textsc{FindRepr} call for each endpoint of each
  deleted tree edge. These calls can be implemented as a single
  \textsc{BatchFindRepr} call which costs $O\left(k \log\left(1 +
      {n}/{k}\right)\right)$ work in expectation by Theorem~\ref{thm:ett-bounds}. Since in the worst case each tree edge
  must be deleted from $\log n$ levels, the overall cost of this step
  is $O\left(k \log n\log\left(1 + {n}/{k}\right)\right)$ in
  expectation. Summing up the costs for each level proves the lemma.
\end{proof}

\begin{theorem}\label{thm:simple-work}
The expected amortized cost per edge insertion or deletion is $O(\log^2 n)$.
\end{theorem}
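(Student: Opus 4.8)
The plan is to establish the $O(\log^2 n)$ amortized bound by combining the insertion cost (Theorem~\ref{thm:insertions}), the non-search part of deletion (Lemma~\ref{lem:batch_delete_exclude_search}), and a new bound on the work done inside all calls to \parlevelsearch{}. For the insertions, each batch of $k$ insertions costs $O(k\log(1+n/k)) = O(k\log n)$ work, giving $O(\log n)$ amortized per inserted edge, which is subsumed by $O(\log^2 n)$. Similarly the non-search deletion work is $O(k\log n\log(1+n/k)) = O(k\log^2 n)$, i.e. $O(\log^2 n)$ amortized per deleted edge. So everything reduces to bounding the total work of all \textsc{ComponentSearch} / \parlevelsearch{} calls by $O(\log^2 n)$ amortized per edge, which I expect to be the crux of the proof.

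To bound the search work, I would set up a charging argument mirroring the HDT analysis, but adapted to the doubling technique. The key structural fact is that whenever \textsc{ComponentSearch} runs a phase with search size $w$ that does \emph{not} find a replacement edge, all $w$ of those non-tree edges are pushed from level $i$ down to level $i-1$ (Line~\ref{cs:pushnonrepl}); each such level-decrease can be charged $O(\log n)$ work (cost of fetching and decreasing the level, by Lemmas~\ref{lem:ettree-fetch-edges} and~\ref{lem:ettree-decrease-level}, noting $\log(1+n_c/l) = O(\log n)$), and since an edge can only decrease its level $\log n$ times total, the total work over \emph{all} failed phases across the whole execution is $O(\log n \cdot \log n)$ per edge $= O(\log^2 n)$ amortized. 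The doubling is what makes this work: by Lemma~\ref{lem:batch_bound_is_root_dominated} (or simply the geometric-series fact that $\sum_{j\le t} 2^j = O(2^t)$), the cost of the \emph{final} (successful or exhausting) phase of a \textsc{ComponentSearch} call is asymptotically dominated by the total cost of its failed phases, so we never do more than a constant factor of extra work relative to what we can charge to level-decreases. I would also need to account for the \textsc{BatchFindRepr}, \textsc{SpanningForest}, and \textsc{BatchInsert} calls in \parlevelsearch{} (Lines~\ref{sls:applyfindrep}--\ref{sls:batchinsertfi}): the replacement-edge set $R$ in a round has one edge per active component, and the spanning forest $T$ promoted to tree edges has at most $|C|-1$ edges per connected piece; the cost of processing $R$ is within a constant factor of the search work that produced it (each returned replacement edge came from a phase of size $\ge 1$, and the non-returned edges in that phase were level-decreased), and promoting $T$ to tree edges can be amortized against future level-decreases of those tree edges, exactly as in HDT.

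The main obstacle will be handling the $O(\log n)$ rounds per level cleanly in the charging argument. Within a single \parlevelsearch{} call there may be many rounds (because batched deletions shatter a component into many pieces, and each round only halves the number of active pieces, like \Boruvka{}), and in each round we re-run \textsc{ComponentSearch} on the surviving active components. I need to argue that the search work across all these rounds is still charged to genuine level-decreases: the point is that in a given round, either \textsc{ComponentSearch} on component $c$ does a failed phase (charging $w$ level-decreases of non-tree edges that really drop to level $i-1$), or it returns a replacement edge after $O(1)$ extra factor of that work; re-examined non-tree edges from an earlier round are gone (they were pushed down), so no edge is level-decreased more than once per level regardless of how rounds split things. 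The one genuinely new contribution here — and what the authors flag as a ``novel analysis'' — is making precise that the per-round overhead ($O(\log n)$-depth operations like \textsc{SpanningForest} and \textsc{FindRepr} on the shrinking component set) does not blow up the work; I would bound it by observing that the total number of replacement edges found and tree edges promoted over the entire execution is itself $O(\,$(number of level-decrease events)$\,) = O(m\log n)$, since every promoted tree edge and every discovered replacement edge can be paired with at least one non-tree edge that got level-decreased or with a deleted tree edge, and each contributes $O(\log n)$ work.

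Putting these together: insertions contribute $O(\log n)$ amortized, non-search deletion work contributes $O(\log^2 n)$ amortized, and the search work (failed phases, final phases, and per-round overhead) contributes $O(\log^2 n)$ amortized, for a total of $O(\log^2 n)$ expected amortized work per edge insertion or deletion, as claimed.
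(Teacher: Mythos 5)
Your proposal follows essentially the same route as the paper's proof: bound the non-search work directly (Theorem~\ref{thm:insertions} and Lemma~\ref{lem:batch_delete_exclude_search}), charge the doubling search to level decreases (a failed run of phases pushes at least half of the edges it examined down one level, each absorbing $O(\log n)$, and an edge's level drops at most $\log n$ times), and absorb the remaining $O(k \log n)$-per-level overhead of processing replacement edges, spanning forests, and component updates into the deletion batch over the $\log n$ levels. One caution: charging a promoted tree edge to its \emph{future} level-decreases is not valid on its own (a promoted edge may never drop again), but your backup accounting---pairing the at most $O(k)$ replacement edges and promotions per level with the $k$ deleted tree edges of the batch---is exactly the charge the paper uses and suffices.
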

\begin{proof}
Algorithm~\ref{alg:batch-deletion} takes as input a batch of $k$ edge
deletions. By Lemma~\ref{lem:batch_delete_exclude_search}, the
expected work performed by \textsc{BatchDeletion} excluding the calls
to \parlevelsearch{} is
\begin{equation}
O\left(k\log n\log\left(1 + \frac{n}{k}\right)\right),
\end{equation}
which is at most $O(k\log^2 n)$ in
expectation. We now consider the cost of the calls to
\parlevelsearch{}. Specifically, we show that the work performed during the calls to
\parlevelsearch{} can either be charged to level decreases on edges,
or is at most $O(k \log n)$ per call in expectation. Since the total number of calls
to \parlevelsearch{} is at most $\log n$, the bounds follow.

First, observe that the number of spanning forest edges we discover, $|S|$,
is at most $k$, since at most $k$ tree edges were deleted initially.
Therefore, the batch insertion on Line~\ref{sls:batchinserts} costs $O(k \log n)$ in expectation
by Theorem~\ref{thm:ett-bounds}.
Similarly, $L$, the number of components that are supplied to
\parlevelsearch{}, is at most $k$. Therefore, the cost of filtering
the components in $L$ based on their size, and checking whether their
representative exists in $F_i$ is at most $O(k \log n)$ in expectation
(Lines~\ref{sls:smallcomponent}--\ref{sls:largecomponent}).

To fetch, examine, and push down $l$ tree or non-tree edges costs
\begin{equation}
O\left(l \log\left(1 + \frac{n}{l}\right)\right),
\end{equation}
work in expectation, by Lemma~\ref{lem:ettree-fetch-edges},
Theorem~\ref{thm:ett-bounds}, and Lemma~\ref{lem:ettree-decrease-level}.
Note that this is at most $O(\log n)$ per edge. In particular, the cost
of retrieving and pushing the tree edges of active components to
level $i-1$ (Line~\ref{cs:pushnonrepl}) is therefore at most $O(\log
n)$ per edge in expectation, which we charge to the corresponding
level decreases.

We now show that all work done while searching for replacement edges
(Lines~\ref{sls:mainloopstart}--\ref{sls:mainloopend}) can be charged to level decreases. Consider an active component,
$c$ in some round. Suppose the algorithm performs $q > 0$ phases before
either the component is exhausted (all incident non-tree edges have been checked),
or a replacement edge is found. First consider the case where it finds a
replacement edge. If $q = 1$, only a single edge was inspected, so then we charge
the $\log n$ work for the round to the edge, which will become a tree edge. Otherwise, it performs
$q-1$ phases which do not produce any replacement edge.

Since phase $w$ inspects $2^w$ edges, it costs $O(2^w \log n)$ work. The total work
over all $q$ phases is therefore
\begin{equation}
  \sum_{w=0}^{q} 2^{w} \log n = O(2^{q} \log n)
\end{equation}
in expectation. However, since no replacement was found during the first $q-1$ phases,
there are at least $2^{q-1} = O(2^q)$ edges that will be pushed down, so we can charge $O(\log n)$
work to each such edge to pay for this. In the other case, $q$ phases run without finding a replacement edge.
In this case, all edges inspected are pushed down, and hence each assumes
a cost of $O(\log n)$ in expectation.

Now, we argue that the work done while processing the replacement
edges is $O(k \log n)$ in expectation over all rounds. Since $k$ edges
were deleted, the algorithm discovers at most $k$ replacement edges.
We charge the work in these steps to the replacement edges that we
find. Let $k'$ be the number of replacement edges that we find.
Filtering the edges, and computing a spanning forest all costs $O(k')$
work. Promoting the edges to tree edges (inserting them into $F_i$ and
updating the adjacency lists) costs $O(k' \log n)$ work in
expectation. Finally, updating the components costs $O(k' \log n)$
work in expectation, which we can charge to either the component, if
it is removed from $C$ in this round, or to the replacement edge that
it finds, which is promoted to a tree edge.  Since the algorithm can
find at most $k$ replacement edges, the cost per level is $O(k\log n)$
in expectation for these steps as necessary.

In total, on each level the algorithm performs $O(k \log n)$ expected work that is not charged to a
level decrease. Summing over $\log n$ levels, this yields an
amortized cost of $O(\log^2 n)$ expected work per edge deletion.
Finally, since the level of an edge can decrease at most $\log n$
times, and an edge is charged $O(\log n)$ expected work each time its level is
decreased, the expected amortized cost per edge insertion is $O(\log^2 n)$.

\end{proof}

\begin{algorithm}[h!]
\caption{The interleaved level search algorithm} \label{alg:interleaved-level-search}
\small
\begin{algorithmic}[1]
\Procedure{ComponentSearch}{$i, c, s$}\label{fr:start}
  \State $w_{\max} \assign c.\textsc{NumNonTreeEdges}$ \label{fr:wmax}
  \State $w \assign \min(s, w_{\max})$\label{fr:wdef}
  \State $E_c \assign$ First $w$ non-tree edges in $c$ \label{fr:fetchedges}
  \State \algorithmicreturn{} $\{$All replacement edges in $E_c\}$\label{fr:returnr}
\EndProcedure\label{fr:end}
\smallskip
\Procedure{PushEdges}{$i, c, s, M$}\label{fetp:start}
  \State $w_{\max} \assign c.\textsc{NumNonTreeEdges}$ \label{fetp:wmax}
  \State $w \assign \min(s, w_{\max})$\label{fetp:wdef}
  \State $E_c \assign \{$First $w$ non-tree edges in $c\}$ \label{fetp:fetchedges}
  \If {$M[c].\textsc{size} \leq 2^{i-1}$ {\bf and} $w < w_{\max}$ } \label{fetp:pushingcondition}
    \State Remove edges in $E_c$ from level $i$ \label{fetp:deletefromleveli}
    \State \algorithmicreturn{} $E_c$\label{fetp:returnec}
  \EndIf
  \State \algorithmicreturn{} $\emptyset$ \label{fetp:returnempty}
\EndProcedure\label{fetp:end}
\smallskip
\Procedure{InterleavedLevelSearch}{$i$, $L = \{c_{1}, c_2, \ldots \}$, $S$}
  \State $F_{i}$.\textsc{BatchInsert}($S$)\label{ils:pushs}
  \State $C \assign c \in L$ with size $\leq 2^{i-1}$
  \State $D \assign c \in L$ with size $> 2^{i-1}$
  \State Push level $i$ tree edges of all components in $C$ to level $i-1$ \label{ils:optreepush}
  \State $r \assign 0,\ T \assign \emptyset,\ E_P \assign \emptyset$ \label{ils:opdefs}
  \State $M \assign \{ c \rightarrow c\ |\ c \in C\}$ 
  \While{$|C| > 0$}\label{ils:loopstart}
    \State $w \assign 2^{r}$
    \State $R \assign \cup_{c \in C}\ \textsc{ComponentSearch}(i, c, w)$\label{ils:findreplacements} \algcomment{In parallel}

    \State $R'\assign \{ (F_{i}.\textsc{FindRepr}(u), F_{i}.\textsc{FindRepr}(v))\ |\ (u, v) \in R\}$
    \State $T'_{r} \assign$ \textsc{SpanningForest}$(R')$ \label{ils:sfcomp}
    \State $T_{r} \assign$ Edges in $R$ corresponding to edges in $T'_{r}$
    \State $T \assign T \cup T_{r}$ \label{ils:addtoT}
    \State Update $M$, the map of supercomponents and their sizes\label{ils:updateM}

    \State $E_P \assign E_P \cup_{c \in C}\ \textsc{PushEdges}(i, c, w, M)$\label{ils:findedgestopush} \algcomment{In parallel}

    \State $D_r \assign$  $\{c \in C$ with no non-tree edges, or size $> 2^{i-1}\}$  \label{sls:setQ}
    \State $D \assign D\ \cup D_r$ \label{ils:dupdate}
    \State $C \assign C\ \setminus D_r$\label{ils:cupdate}
    \State $r \assign r + 1$ \label{ils:rupdate}
  \EndWhile\label{ils:loopend}

  \State Promote edges in $T$ not in $E_p$ to tree edges at level $i$\label{ils:promotetree}
  \State $F_{i}$.\textsc{BatchInsert}($T$)\label{ils:batchinserttree}
  \State Insert non-tree and tree edges in $E_P$ to level $i - 1$\label{ils:pushdown}
  \State \algorithmicreturn{} $(D, S \cup T)$\label{ils:return}
\EndProcedure
\end{algorithmic}
\end{algorithm}

\section{An Improved Algorithm}\label{sec:interleaved-parallel}

In this section we design an improved version of the parallel algorithm
that performs less work than the algorithm from
Section~\ref{sec:simple-parallel}. Furthermore, the improved algorithm
runs in $O(\log^3 n)$ depth w.h.p., improving on
Algorithm~\ref{alg:simple-level-search} which runs in $O(\log^4 n)$
depth w.h.p.


\subsection{The Interleaved Deletion Algorithm}

\myparagraph{Overview}
Algorithm~\ref{alg:interleaved-level-search} is based on
\emph{interleaving} the phases of doubling that search for replacement
edges with the spanning forest computation performed on the
replacement edges. Recall that in
Algorithm~\ref{alg:simple-level-search}, the number of edges examined
in each round \emph{is reset}, and the doubling algorithm must
therefore start with an initial search size of $1$ on the next round.
Because the doubling resets from round to round, the number of phases
per round can be $O(\log n)$ in the worst case, making the total
number of phases per level $O(\log^2 n)$. Instead, the interleaved algorithm avoids
resetting the search size by maintaining a \emph{single},
geometrically increasing search size over all rounds of the search.

The second important difference in
Algorithm~\ref{alg:interleaved-level-search} compared with
Algorithm~\ref{alg:simple-level-search} is that it \emph{defers}
inserting tree edges found on this level until the end of the search.
Instead, it continues to search for replacement edges from the
\emph{initial} components until the component is deactivated. This
property is important to show that the work done for a component
across all rounds is dominated by the cost of the last round, since the number of
vertices in the component is fixed, but the number of non-tree edges
examined doubles in each round. For the same reason, it also defers
inserting the pushed edges onto level $i - 1$. We crucially use this
property to obtain improved batch work bounds (Section~\ref{sec:batch}).

Another difference in the modified algorithm is that if a
component is still active after adding the replacement edges found in
this round (i.e., the component on level $i$ still has size at most $2^{i-1}$), then \emph{all} of the edges found in this round can be pushed to level $i - 1$ without violating Invariant~\ref{inv:component_sizes}. Notice now that
when pushing down edges, both the \emph{tree and non-tree} edges
that are found in this round are pushed.
Pushing down all edges ensures that the algorithm performs enough level decreases to
which to charge the work performed during the next round.
The component deactivates either once it runs out of incident non-tree
edges, or when it becomes too large. Since the algorithm defers adding
the new tree edges found until the end of the level, it also maintains
an auxiliary data structure that dynamically tracks the size of the
resulting components as new edges are found.


\myparagraph{The Deletion Algorithm}
We briefly describe the main differences between \intlevelsearch{},
the new level search procedure, and \parlevelsearch{}. The algorithm
consists of a number of \defn{rounds}
(Lines~\ref{ils:loopstart}--\ref{ils:loopend}). We use $r$ to track
the round numbers, and we use $E_P$ to store the set of both tree and
non-tree edges that will be pushed to level $i-1$ at the end of the
search at this level (Line~\ref{ils:opdefs}).  $T$ stores the set of
tree edges that have been selected, which will be added to the
spanning forest at the end of the level. Lastly, we use $M$ to
maintain a dynamic mapping from all the components in $L$ to a unique
representative for their contracted supercomponent (initially itself),
and the size of the contracted supercomponent.

In round $r$, the algorithm first retrieves the first $2^{r}$ (or
fewer) edges of each the active components in parallel, and finds
replacement edges. All replacement edges are added to the set $R$
(line~\ref{ils:findreplacements}).

The algorithm then computes a spanning forest over the edges in
$R$, and computes $T_{r}$, which are the original replacement edges in
$R$ that were selected as spanning forest edges
(lines~\ref{ils:sfcomp}--\ref{ils:addtoT}). The spanning forest
computation returns, in addition to the tree edges, a mapping from the
vertices in $R'$ to their connectivity label (line~\ref{ils:sfcomp}),
which can be used on line~\ref{ils:updateM} to efficiently update the
representatives of all affected components and the sizes of the
supercomponents.

The next step maps over the components in parallel again, calling
\textsc{PushEdges} on each active component, and checks whether the
edges searched in this round can be (lazily) pushed to
level $i-1$ (Line~\ref{ils:findedgestopush}).\footnote{
  Note that the set of edges retrieved by \textsc{PushEdges} in
Line~\ref{fetp:fetchedges} is assumed to be the same as the one in
Line~\ref{fr:fetchedges}. This assumption is satisfied by using our
\textsc{FetchEdges} primitive on a \batchparallel{} ET-tree, and can
be satisfied in general by associating the edges retrieved in
\textsc{ComponentSearch} to be used in \textsc{PushEdges}.
}
If a component is still active (its new size is small enough to still
be searched, and the component still has some non-tree edges
remaining) (line~\ref{fetp:pushingcondition}), all of the searched
edges are removed from the adjacency lists at level $i$
(line~\ref{fetp:deletefromleveli}) and are added to the set of edges
that will be pushed to level $i-1$ at the end of the level
(Lines~\ref{fetp:returnec} and~\ref{ils:findedgestopush}). Note that
this set of edges contains both replacement tree edges we discovered,
and non-tree edges. The tree-edges can be pushed down to level $i-1$
because the component with the tree edges added has size $\leq
2^{i-1}$.


The end of the round
(lines~\ref{sls:setQ}--\ref{ils:rupdate}) handles updating
the set of components and incrementing the round number, as in
Algorithm~\ref{alg:simple-level-search}.


Finally, once all components are inactive, the tree edges found at
this level that are not contained in $E_p$ are promoted (the tree
edges added to $E_p$ have their level decreased to $i-1$) and inserted
into $F_{i}$ (Lines~\ref{ils:promotetree}--\ref{ils:batchinserttree}),
and all edges added to $E_P$ in Line~\ref{ils:findedgestopush} are
pushed down to level $i-1$ (Line~\ref{ils:pushdown}). Note that any
tree-edges found in this set are promoted in level $i-1$ and added to
$F_{i-1}$. The procedure returns the set of components and all
replacement edges found at this level and levels below it
(Line~\ref{ils:return}).

\subsection{Cost Bounds}
We start by showing that the depth of
Algorithm~\ref{alg:interleaved-level-search} is $O(\log^3 n)$.

\begin{lemma}\label{lem:interleave-rounds}
The number of rounds performed by
Algorithm~\ref{alg:interleaved-level-search} is $O(\log n)$ and the
depth of each round is $O(\log n)$ w.h.p.. The depth of the
\intlevelsearch{} is therefore $O(\log^2 n)$ w.h.p..
\end{lemma}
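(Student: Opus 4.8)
The plan is to bound separately the number of rounds and the depth of a single round, then multiply. For the round count, I would argue exactly as in the depth analysis of the simpler algorithm (Theorem~\ref{thm:deletion-depth}): in each round of \intlevelsearch{}, every active component $c \in C$ either gets deactivated at the end of the round (lines~\ref{sls:setQ}--\ref{ils:cupdate}, because it runs out of non-tree edges or grows beyond $2^{i-1}$) or it contributes at least one replacement edge to $R$, and hence at least one edge to the spanning forest $T'_r$ computed on line~\ref{ils:sfcomp}, unless that edge is redundant because it connects two components already merged in this round. So consider the graph whose vertices are the still-active components and whose edges are the replacement edges found this round: the spanning forest on line~\ref{ils:sfcomp} merges them into supercomponents, and any active component that is not deactivated participates in a merge, so the number of distinct active supercomponents drops by at least a constant factor each round (in the worst case components pair off, halving the count). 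After $O(\log n)$ rounds there are no active components left. One subtlety to check: once merged, components are tracked through the map $M$ and the representative update on line~\ref{ils:cupdate} via $C \assign \{F_i.\textsc{Repr}(c) \mid c \in C\}$ — wait, that line is from Algorithm~\ref{alg:simple-level-search}; in \intlevelsearch{} the analogous bookkeeping is via $M$ and line~\ref{ils:cupdate}. I would note that $|C|$ is nonincreasing and strictly decreases by a constant factor per round until it hits $0$, which is all that is needed.

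For the depth of a single round, I would walk through lines~\ref{ils:loopstart}--\ref{ils:rupdate} and check each step is $O(\log n)$ depth w.h.p. The parallel calls to \textsc{ComponentSearch} (line~\ref{ils:findreplacements}) each do a \textsc{FetchEdges} and a \textsc{BatchFindRepr}, which are $O(\log n)$ depth by Lemma~\ref{lem:ettree-fetch-edges} and Theorem~\ref{thm:ett-bounds}; running them in parallel over all components keeps the depth $O(\log n)$ (the forked calls are independent and a semisort/pack to assemble $R$ adds only $O(\log n)$). The \textsc{FindRepr} applications and the \textsc{SpanningForest} call (line~\ref{ils:sfcomp}) are $O(\log n)$ depth (using Gazit's algorithm for the spanning forest, $O(\log n)$ depth w.h.p., as in the proof of Theorem~\ref{thm:insertions}). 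Updating $M$ (line~\ref{ils:updateM}) using the connectivity labels returned by the spanning forest routine can be done with a semisort and a few parallel maps in $O(\log n)$ depth. The parallel \textsc{PushEdges} calls (line~\ref{ils:findedgestopush}) again just fetch edges and optionally delete them from level $i$, which is $O(\log n)$ depth by Lemma~\ref{lem:ettree-fetch-edges} and Lemma~\ref{lem:adjacencyops}. The end-of-round updates to $C$, $D$ are filters/packs, $O(\log n)$ depth. So each round is $O(\log n)$ depth w.h.p. Multiplying, \intlevelsearch{} has depth $O(\log^2 n)$ w.h.p.

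The main obstacle I anticipate is the round-count argument rather than the per-round depth, which is essentially a checklist. Specifically, I need to be careful that a component that remains active across many rounds is genuinely being merged with others (so the count drops) and is not "stuck": this requires observing that if $c$ is active at the start of round $r$ and not deactivated at its end, then in particular $c$ had a non-tree edge available (otherwise it would have been put in $D_r$), and that non-tree edge — being in the first $2^r$ fetched edges or not — either turns out to be a replacement edge (contributing to $R$ and thus to a merge, unless it is internal to an already-formed supercomponent) or... actually an examined edge that is not a replacement edge means both endpoints are in the same component, which can only happen if a previous round's replacement edge already connected them — but since we \emph{defer} promoting tree edges on level $i$ until the end (they sit in $T$, not in $F_i$), I should double-check the bookkeeping: the merges this round are tracked through $M$, not through $F_i$, so "replacement edge" in \textsc{ComponentSearch} is tested against $F_i$'s current representatives, and the deferred merges mean a nominally-still-active $c$ could keep re-finding edges to a component it was already (virtually) merged with via $T$. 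I would resolve this by contracting according to $M$ when deciding deactivation and arguing the supercomponent count under $M$ drops geometrically; this is the one place the argument differs from the simple algorithm and deserves an explicit sentence or two in the write-up.
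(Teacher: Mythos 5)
Your per-round depth analysis is fine and matches the paper's (each round is a constant number of batched ET-tree operations, a spanning forest computation, and some packs, all $O(\log n)$ depth w.h.p.). The gap is in your round-count argument. You import the ``components pair off, so the active count halves each round'' argument from Theorem~\ref{thm:deletion-depth}, but it does not transfer to \intlevelsearch{}: in the interleaved algorithm \textsc{ComponentSearch} performs a \emph{single} probe of the first $\min(2^r, w_{\max})$ non-tree edges (Lines~\ref{fr:wdef}--\ref{fr:fetchedges}), rather than doubling within the round until a replacement is found or the edges are exhausted. Consequently an active component whose first $2^r$ non-tree edges are all internal (non-replacement) edges contributes nothing to $R$, participates in no merge, and yet remains active, since it may still be small and still have unexamined non-tree edges. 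So neither $|C|$ nor the supercomponent count under $M$ need drop by a constant factor in a round, and your premise ``any active component that is not deactivated participates in a merge'' is false. The subtlety you flag at the end (deferred promotion of $T$ and bookkeeping through $M$) is real but tangential, and contracting by $M$ does not repair this: the problem is components that find \emph{no} replacement edge in a round, not components that re-find edges into their virtual supercomponent.

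The paper's argument is different and simpler, and is exactly the point of the interleaving: the search size $w = 2^r$ is shared across rounds and never resets, so if a component $c$ is still active in round $r$ then $2^{r-1} < w_{\max}(c)$ (otherwise the check on Line~\ref{fetp:pushingcondition} fails and $c$ is deactivated at the end of that round, as in the proof of Lemma~\ref{lem:component-doubling}). Since $w_{\max}(c)$ is at most the number of non-tree edges, which is $O(n^2)$, every component is deactivated within $O(\log n)$ rounds, independently of how the components merge. You should replace the halving argument with this doubling/exhaustion argument; the rest of your proof then goes through.
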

\begin{proof}
Each round of the algorithm increases the search size of a component
by a factor of $2$. Therefore, after $O(\log n)$ rounds, every
non-tree edge incident on a component will be considered and the
algorithm will terminate.

To argue the depth bound, we consider the main steps performed during
a round. Fetching, examining and removing the edges from level $i$ takes $O(\log n)$
depth w.h.p.\ by Lemma~\ref{lem:ettree-fetch-edges}, Theorem~\ref{thm:ett-bounds},
and Lemma~\ref{lem:ettree-decrease-level}. Computing a spanning forest on the
replacement edges and filtering the components (at most $k$
replacement edges, or components) can be done in $O(\log k)$ depth.
The depth per round is therefore $O(\log n)$ w.h.p. and the depth of
\intlevelsearch{} is $O(\log^2 n)$ w.h.p.
\end{proof}

Combining Lemma~\ref{lem:interleave-rounds} with the fact that there
are $\log n$ levels gives the following theorem.

\begin{theorem}\label{lem:interleaved-depth}
A batch of $k$ edge deletions can be processed in $O(\log^3 n)$ depth w.h.p.
\end{theorem}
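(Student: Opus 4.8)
The plan is to combine the per-level depth bound from Lemma~\ref{lem:interleave-rounds} with the observation that the batch deletion algorithm (Algorithm~\ref{alg:batch-deletion}) processes the levels \emph{sequentially} in the loop on Line~\ref{bd:looplevels}, calling \intlevelsearch{} once per level. Since each call to \intlevelsearch{} has depth $O(\log^2 n)$ w.h.p., and there are at most $\log n$ levels, the total depth contributed by these calls is $O(\log^3 n)$ w.h.p.\ by a union bound over the $\log n$ levels.

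First I would account for the work done by \textsc{BatchDeletion} outside the calls to \intlevelsearch{}: deleting the $k$ edges from the adjacency arrays $A_0, \ldots, A_{\log n}$ (Line~\ref{bd:deletea}), filtering the tree edges (Line~\ref{bd:identifytree}), deleting the tree edges from $F_0, \ldots, F_{\log n}$ (Line~\ref{bd:deletetree}), and computing the initial component set $C$ via a \textsc{BatchFindRepr} call (Line~\ref{bd:findcomps}). Each of these steps has depth $O(\log n)$ w.h.p.\ by Lemma~\ref{lem:adjacencyops} and Theorem~\ref{thm:ett-bounds}, and they are not on any critical path that multiplies by $\log n$ (deletion from $\log n$ levels can be done with a single batched operation of depth $O(\log n)$). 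So the dominant contribution is the sequential composition of the $\log n$ calls to \intlevelsearch{}.

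Next I would invoke Lemma~\ref{lem:interleave-rounds} directly: the depth of a single call to \intlevelsearch{} is $O(\log^2 n)$ w.h.p., arising from $O(\log n)$ rounds each of depth $O(\log n)$. Composing these $\log n$ calls in sequence (as the \texttt{for} loop does) gives depth $\sum_{i=\min_l}^{\log n} O(\log^2 n) = O(\log^3 n)$. To make the w.h.p.\ claim rigorous, I would note that each of the $O(\log n)$ rounds across all $O(\log n)$ levels fails to meet its $O(\log n)$ depth bound with probability at most $1/n^{c}$ for a suitable constant $c$, so by a union bound over the $O(\log^2 n)$ rounds the total depth is $O(\log^3 n)$ with probability at least $1 - 1/n^{c'}$.

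The main obstacle — though it is really a matter of careful bookkeeping rather than a deep difficulty — is confirming that \emph{all} the auxiliary operations in \intlevelsearch{} that were not explicitly bounded in Lemma~\ref{lem:interleave-rounds} (the final promotion of tree edges on Lines~\ref{ils:promotetree}--\ref{ils:batchinserttree}, the batched push-down of $E_P$ to level $i-1$ on Line~\ref{ils:pushdown}, the initial \textsc{BatchInsert}($S$) on Line~\ref{ils:pushs}, and the one-shot tree-edge push on Line~\ref{ils:optreepush}) each complete in $O(\log n)$ depth w.h.p. These all follow from Theorem~\ref{thm:ett-bounds}, Lemma~\ref{lem:ettree-fetch-edges}, and Lemma~\ref{lem:ettree-decrease-level}, since each is a single batched ET-tree or adjacency-array operation, so they do not change the $O(\log^2 n)$ per-level bound. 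With that in hand, the theorem follows immediately from the $\log n$-fold sequential composition.
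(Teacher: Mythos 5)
Your proposal is correct and matches the paper's argument, which likewise obtains the bound by combining Lemma~\ref{lem:interleave-rounds} (each call to \intlevelsearch{} has depth $O(\log^2 n)$ w.h.p.) with the fact that the $\log n$ levels are processed sequentially. The extra bookkeeping you supply---bounding the steps of \textsc{BatchDeletion} outside the level search, the auxiliary batched ET-tree operations per level, and the union bound over rounds---just makes explicit what the paper leaves implicit.
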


We now consider the work performed by the algorithm.  We start with a
lemma showing that the search-size for a component increases
geometrically until the round where the component is deactivated.

\begin{lemma}\label{lem:component-doubling}
Consider a component, $c$, that is active at the end of round $r-1$.
If $c$ is not removed from $C$, then it examines $\geq 2^{r-1}$ edges
that are pushed down to level $i-1$ at the end of the search.
\end{lemma}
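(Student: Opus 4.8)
The plan is to trace carefully what happens to component $c$ during round $r$ of \intlevelsearch{}, under the hypothesis that $c$ is active at the end of round $r-1$ (so $c$ has survived into round $r$, meaning it still has non-tree edges and has size $\leq 2^{i-1}$) and that $c$ is \emph{not} removed from $C$ at the end of round $r$. First I would record the search size: on round $r$ the algorithm sets $w = 2^{r}$, and \textsc{ComponentSearch} examines $E_c$, the first $w' = \min(2^r, w_{\max})$ non-tree edges of $c$. I want to show $2^{r-1}$ of these edges get pushed to level $i-1$. The key is the decision made by \textsc{PushEdges} on Line~\ref{fetp:pushingcondition}: it pushes $E_c$ precisely when $M[c].\textsc{size} \leq 2^{i-1}$ \emph{and} $w' < w_{\max}$.

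The main step is to verify both conjuncts of that condition hold for $c$ in round $r$. For the size condition $M[c].\textsc{size} \leq 2^{i-1}$: this is exactly (the contrapositive of) one of the two deactivation criteria on Line~\ref{sls:setQ}, which says $c$ leaves $C$ if its supercomponent has size $> 2^{i-1}$. Since by hypothesis $c$ is \emph{not} removed from $C$ at the end of round $r$, and $M$ is updated on Line~\ref{ils:updateM} \emph{before} \textsc{PushEdges} is called, the size recorded in $M[c]$ at the time \textsc{PushEdges} runs is at most $2^{i-1}$. For the second conjunct $w' < w_{\max}$: the other deactivation criterion is "$c$ has no non-tree edges (remaining)". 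Since $c$ is not deactivated this round, it still has non-tree edges remaining even after this round's edges are (lazily) accounted for; because \textsc{PushEdges} fetches the \emph{same first $w'$ edges} (per the footnote on Line~\ref{fetp:fetchedges}), "edges remaining" means $w' < w_{\max}$, i.e., $\min(2^r, w_{\max}) < w_{\max}$, which forces $w' = 2^r$. Hence both conjuncts hold, \textsc{PushEdges} returns $E_c$ with $|E_c| = 2^r \geq 2^{r-1}$, these edges are removed from level $i$ (Line~\ref{fetp:deletefromleveli}) and ultimately pushed to level $i-1$ at the end of the search (Line~\ref{ils:pushdown}).

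Actually the cleanest phrasing gives the stronger bound $2^{r}$, but $2^{r-1}$ is what the lemma asks for and also covers the boundary: one should double-check the very first round and the interplay with edges already pushed in earlier rounds (the "$w < w_{\max}$" check is against the \emph{full} incident non-tree count at this level, and earlier rounds of this level do not remove edges from level $i$ until the final Line~\ref{ils:pushdown}, so $w_{\max}$ is stable across rounds — this is the point of deferring the pushes, emphasized in the Overview). So the edges counted in round $r$ are genuinely the new ones, not recounted from round $r-1$.

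The main obstacle I expect is bookkeeping rather than mathematics: one must be careful that \textsc{ComponentSearch} and \textsc{PushEdges} see a consistent set $E_c$ (handled by the footnote), that $M$ reflects the post-round sizes when \textsc{PushEdges} consults it (it does, since Line~\ref{ils:updateM} precedes Line~\ref{ils:findedgestopush}), and that the non-deactivation hypothesis is applied to the right round — "$c$ active at end of round $r-1$ and not removed from $C$" must be read as "not removed at the end of round $r$", which is exactly the window during which round $r$'s search and push occur. Once these are pinned down, the inequality $|E_c| = 2^r \geq 2^{r-1}$ is immediate and the lemma follows.
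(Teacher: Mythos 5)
Your argument is correct and is essentially the paper's proof run in the forward direction: the paper proves the contrapositive (if fewer than $2^{r-1}$ edges are pushed, the check on Line~\ref{fetp:pushingcondition} must have failed, which forces deactivation at Line~\ref{sls:setQ}), while you argue directly that non-deactivation forces both conjuncts of that check to hold so that $w' = 2^r$ edges are pushed; your attaching the conclusion to round $r$ rather than round $r-1$ is a harmless reindexing of the same statement and feeds into Lemma~\ref{lem:component-root-dominated} identically. One small correction: your aside that edges are not removed from level $i$ until the final Line~\ref{ils:pushdown} misreads the algorithm --- \textsc{PushEdges} removes the fetched edges from level $i$ during the round (Line~\ref{fetp:deletefromleveli}), and only their insertion into level $i-1$ is deferred --- but nothing in your main argument actually relies on that remark.
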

\begin{proof}

We prove the contrapositive. Suppose that $< 2^{r-1}$ edges are pushed
down in total by $c$ in the last round.  Then, we will show that $c$
cannot be active in the next round (i.e., it is removed from $C$ in
round $r-1$).

Notice that $c$ must be active at the start of round $r-1$.  Consider
the check on Line~\ref{fetp:pushingcondition}, which checks whether
$w \leq 2^{r-1}$ and $w < w_{\max}$ on this round. Suppose for the
same of contradiction that both conditions are true. Then, by
the fact that $w <  w_{\max}$, it must be the case that $w = 2^{r-1}$
by Line~\ref{fetp:wdef}. If the condition is true, then on
Line~\ref{fetp:deletefromleveli} the algorithm adds $2^{r-1}$ edges to
be pushed to level $i-1$, contradicting our assumption that $< 2^{r-1}$ edges are pushed.

Therefore the check on Line~\ref{fetp:pushingcondition} must be false,
giving that either $w > 2^{i-1}$, or $w =  w_{\max}$. This
means that $c$ will be marked as inactive on Line~\ref{sls:setQ},
and then become deactivated on line~\ref{ils:cupdate}.
Therefore, if $< 2^{r-1}$ edges are pushed down by $c$ in round $r-1$,
$c$ is deactivated at the end of the round, concluding the proof.

%
\end{proof}

\begin{lemma}\label{lem:component-root-dominated}
	Consider the work done by some component $c$ over the course of
  \intlevelsearch{} at a given level. Let $R$ be the total number
  rounds that $c$ is active. Then, $c$ pushes down $p_{c} = 2^{R} - 1$ edges
  in total. Furthermore, the total cost of searching for and pushing
  down replacement edges performed by $c$ is
  \begin{equation}
    O\left(p_{c}\log\left(1 + \frac{n_{c}}{p_{c}}\right)\right)
  \end{equation}
  in expectation, where $n_{c}$ is the number of vertices in $c$.
\end{lemma}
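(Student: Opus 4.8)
The plan is to break the claimed bound into two parts: the exact count of pushed-down edges, and the cost bound. For the edge count, I would track the search size across rounds. By construction (Line~\ref{ils:opdefs} initializes $r \gets 0$, and $w \gets 2^r$ at the start of each round), in round $r$ the component examines $w = 2^r$ non-tree edges, provided it has that many left. By Lemma~\ref{lem:component-doubling}, if $c$ is active at the end of round $r-1$ then it pushed down $\geq 2^{r-1}$ edges in that round; combined with the check on Line~\ref{fetp:pushingcondition}, when $c$ is active and not yet exhausted it pushes down \emph{exactly} $w = 2^r$ edges in round $r$ (all of the fetched edges, since $w < w_{\max}$ and the size condition holds). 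Summing over rounds $r = 0, 1, \ldots, R-1$ in which $c$ is active gives $\sum_{r=0}^{R-1} 2^r = 2^R - 1$ pushed-down edges, which is $p_c$.

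For the cost bound, I would account for the two operations $c$ performs in round $r$: (i) \textsc{ComponentSearch}, which fetches the first $2^r$ non-tree edges and runs \textsc{BatchFindRepr} on their endpoints to identify replacement edges, and (ii) \textsc{PushEdges}, which re-fetches the same $2^r$ edges and removes them from level $i$. By Lemma~\ref{lem:ettree-fetch-edges}, Theorem~\ref{thm:ett-bounds}, and Lemma~\ref{lem:ettree-decrease-level}, each of these costs $O\left(2^r \log\left(1 + \frac{n_c}{2^r}\right)\right)$ in expectation, where $n_c$ is the number of vertices in $c$ (note $n_c$ is fixed across all rounds because this algorithm defers inserting tree edges and pushing edges to the end of the level). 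Hence the total cost of $c$'s search-and-push work over all $R$ active rounds is, up to constants,
\begin{equation}
\sum_{r=0}^{R-1} 2^r \log\left(1 + \frac{n_c}{2^r}\right)
\end{equation}
in expectation.

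Finally, I would apply Lemma~\ref{lem:batch_bound_is_root_dominated} with the substitution $w \mapsto r$ and $n \mapsto n_c$: the sum is dominated by its last term, giving $O\left(2^{R}\log\left(1 + \frac{n_c}{2^{R}}\right)\right)$. Since $p_c = 2^R - 1$, we have $2^R = p_c + 1 = \Theta(p_c)$ (for $R \geq 1$; the case $R = 0$ is trivial as $c$ does no work), and by Lemma~\ref{lem:batch_bound_is_increasing} the function $x \log(1 + n_c/x)$ is monotone, so $2^R \log(1 + n_c/2^R) = O\left(p_c \log\left(1 + \frac{n_c}{p_c}\right)\right)$, which is the claimed bound. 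The main obstacle is the first step: nailing down that an active, non-exhausted component pushes down \emph{exactly} $2^r$ edges in round $r$ — this requires carefully reconciling the $\min(s, w_{\max})$ truncation in both \textsc{ComponentSearch} and \textsc{PushEdges} with the size check on Line~\ref{fetp:pushingcondition}, and confirming (via Lemma~\ref{lem:component-doubling} and the fact that only round $R-1$ can be the exhausting/oversizing round) that the search size has not yet been truncated in any round before the last active one, so the $2^r$ geometric progression is clean and the telescoping sum $\sum_{r=0}^{R-1} 2^r = 2^R - 1$ holds exactly.
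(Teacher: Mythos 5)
Your proposal is correct and follows essentially the same route as the paper's proof: count the pushed edges via Lemma~\ref{lem:component-doubling} as a geometric series summing to $2^{R}-1$, bound each round's fetch/examine cost by $O\bigl(2^{r}\log\bigl(1+\tfrac{n_c}{2^{r}}\bigr)\bigr)$ using the ET-tree lemmas, and collapse the sum to its last term with Lemma~\ref{lem:batch_bound_is_root_dominated} before identifying $2^{R}=\Theta(p_c)$. The only (immaterial) difference is bookkeeping: the paper charges the push-down as a single end-of-level batch of $p_c$ edges via Lemma~\ref{lem:ettree-decrease-level}, whereas you fold the per-round removals into the round costs and make the final monotonicity step (Lemma~\ref{lem:batch_bound_is_increasing}) explicit.
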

\begin{proof}
By Lemma~\ref{lem:component-doubling}, for each round $r < R$, $c$
adds $2^{r}$ edges to be pushed down. Summing over all
rounds shows that the total number of edges added to be pushed down is
$2^{R} - 1$. The cost of pushing down these edges at the end of the search
at this level is exactly
\begin{equation}
    O\left(p_{c}\log\left(1 + \frac{n_{c}}{p_{c}}\right)\right).
\end{equation}
by Lemma~\ref{lem:ettree-decrease-level}, since the size of the tree
that is affected is $n_c$.

We now consider the cost of fetching and examining the edges over all
rounds.
The cost of fetching and examining $2^{r}$ edges is
\begin{equation}
  O\left(2^{r}\log\left(1 + \frac{n_{c}}{2^{r}}\right)\right),
\end{equation}
in expectation by Theorem~\ref{thm:ett-bounds} and Lemma~\ref{lem:ettree-fetch-edges}.
Summing over all rounds $r < R$, the work is
\begin{equation}
  \sum_{r=1}^{R-1} O\left(2^{r}\log\left(1 + \frac{n_{c}}{2^{r}}\right)\right)
\end{equation}
in expectation to fetch and examine edges in the first $R-1$ rounds, which is equal to
\begin{equation}
  O\left(2^{R}\log\left(1 + \frac{n_{c}}{2^{R}}\right)\right),
\end{equation}
by Lemma~\ref{lem:batch_bound_is_root_dominated}. Since on round $R$,
the algorithm searches at most $2^{R}$ edges, the total cost of searching
for replacement edges over all rounds is at most
\begin{equation}
  O\left(2^{R}\log\left(1 + \frac{n_{c}}{2^{R}}\right)\right) =
    O\left(p_{c}\log\left(1 + \frac{n_{c}}{p_{c}}\right)\right).
\end{equation}

\end{proof}

\begin{lemma}\label{lem:interleaved_level_search_cost}
	The cost of \textsc{InterleavedLevelSearch} is at most
	\begin{equation}
	O\left(k \log\left(1 + \frac{n}{k} \right) +
         p \log\left(1 + \frac{n}{p} \right)\right)
	\end{equation}
  in expectation where $p$ is the total number of edges pushed down.
\end{lemma}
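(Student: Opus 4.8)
The plan is to account for the work of \intlevelsearch{} in three parts: (i) the one-time setup work (the batch insert of $S$, the size-based partitioning into $C$ and $D$, and the initial push of level-$i$ tree edges on Line~\ref{ils:optreepush}); (ii) the per-round bookkeeping on the replacement edges (the \textsc{FindRepr} calls, the spanning forest computation, updating $M$, and filtering the component set); and (iii) the per-component search-and-push work, which is the dominant term. For part (iii) I would invoke Lemma~\ref{lem:component-root-dominated}, which already says that a single component $c$ with $n_c$ vertices pushing down $p_c$ edges incurs $O\!\left(p_c \log\!\left(1 + \frac{n_c}{p_c}\right)\right)$ expected work for all of its searching and pushing. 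Summing this over all active components and applying Lemma~\ref{lem:component_bounds} (with $\sum p_c = p$ and $\sum n_c \le n$, since the initial components are vertex-disjoint) collapses the sum to $O\!\left(p \log\!\left(1 + \frac{n}{p}\right)\right)$, which is the second term in the claimed bound.

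First I would bound (i). The set $S$ of previously-found replacement edges has size at most $k$ (at most $k$ tree edges were deleted overall), so the \textsc{BatchInsert} on Line~\ref{ils:pushs} costs $O\!\left(k \log\!\left(1 + \frac{n}{k}\right)\right)$ by Theorem~\ref{thm:ett-bounds}. Likewise $|L| \le k$, so partitioning $L$ by size and the representative lookups cost $O\!\left(k \log\!\left(1 + \frac{n}{k}\right)\right)$. The initial tree-edge push on Line~\ref{ils:optreepush} touches tree edges that are undergoing a level decrease, so by Lemma~\ref{lem:ettree-decrease-level} it costs $O(\log n)$ per such edge, which is charged to those level decreases and is also bounded by $O\!\left(k\log(1 + n/k)\right)$ summed appropriately (the number of such edges is at most the number of edges ever pushed); in any case it is absorbed into the two stated terms.

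Next I would bound (ii), the per-round work on replacement edges. In each round the algorithm finds some set $R$ of replacement edges; since at most $k$ tree edges were deleted, the total number of replacement edges discovered across all rounds at this level is at most $k$ (each newly committed tree edge reconnects two supercomponents, and there are at most $k$ supercomponents to merge). For the edges found in a given round, the \textsc{FindRepr} batch call, the \textsc{SpanningForest} call, updating $M$, and the set-difference operations all cost $O\!\left(|R| \log\!\left(1 + \frac{n}{|R|}\right)\right)$ (the spanning-forest and semisort-style steps are linear, the ET-tree and contraction-size updates pay the $\log$ factor). Summing over rounds and applying Lemma~\ref{lem:component_bounds} again — this time splitting $k$ into the per-round counts — gives $O\!\left(k \log\!\left(1 + \frac{n}{k}\right)\right)$ total. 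Finally, the closing steps (Lines~\ref{ils:promotetree}--\ref{ils:pushdown}): promoting the $|T| \le k$ tree edges and \textsc{BatchInsert}($T$) cost $O\!\left(k\log(1+n/k)\right)$, while inserting the edges of $E_P$ onto level $i-1$ touches exactly the $p$ pushed edges and, again splitting $E_P$ across its originating components and using Lemma~\ref{lem:component_bounds}, costs $O\!\left(p \log\!\left(1 + \frac{n}{p}\right)\right)$. Adding parts (i), (ii), (iii) yields the claimed bound.

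The main obstacle I anticipate is being careful that the per-component sizes $n_c$ used in Lemma~\ref{lem:component-root-dominated} are the sizes of the \emph{initial} disconnected pieces (which partition at most $n$ vertices and so satisfy $\sum n_c \le n$), and that the per-round replacement-edge counts genuinely sum to at most $k$ rather than, say, blowing up because a component is searched over many rounds — this is exactly why the interleaved algorithm defers committing tree edges and keeps searching from the original components, so that each component contributes its search cost only once (dominated by the last round) rather than once per round. Making the charging disjoint across these two accounting schemes (components for the push/search term, replacement edges for the bookkeeping term) without double-counting is the delicate part; everything else is a routine application of Theorem~\ref{thm:ett-bounds}, Lemmas~\ref{lem:ettree-fetch-edges} and~\ref{lem:ettree-decrease-level}, and the combinatorial Lemma~\ref{lem:component_bounds}.
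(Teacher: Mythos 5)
Your overall decomposition matches the paper's: setup is bounded by $O(k\log(1+n/k))$, and the dominant search-and-push cost is handled per component via Lemma~\ref{lem:component-root-dominated} and then collapsed with Lemma~\ref{lem:component_bounds} to $O(p\log(1+n/p))$. The step that does not hold as stated is in your part (ii): the claim that at most $k$ replacement edges are discovered across all rounds. In the interleaved algorithm, \textsc{ComponentSearch} returns \emph{all} replacement edges among the fetched edges, and because tree edges found at this level are deliberately not inserted into $F_i$ until the level ends, any edge joining two distinct initial components keeps qualifying as a replacement edge for the whole level, even after a tree edge between those supercomponents has already been selected. Hence the set $R$ in a single round can be as large as the number of edges examined in that round (already with $k=1$, if many parallel non-tree edges cross the split), and its total over all rounds is not $O(k)$; only $|T|$, the set of committed spanning-forest edges, is $O(k)$. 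So your $O(k\log(1+n/k))$ bound for the per-round bookkeeping does not follow from that count. A smaller imprecision of the same flavor is in part (i): the level-$i$ tree edges pushed on Line~\ref{ils:optreepush} are not bounded by $k$ either; the paper accounts for them inside the $p$ term by setting $p = t + \sum_c p_c$.

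The lemma itself survives, and the paper's accounting shows the repair: the \textsc{FindRepr} calls on $R$ and the (linear-work) spanning-forest and $M$ updates are per-\emph{examined}-edge costs, so they are already absorbed into the per-component term of Lemma~\ref{lem:component-root-dominated} that you invoke in part (iii) --- the edges examined in every round but the last are pushed, and the last round is dominated because the number of edges a component pushes is within a constant factor of its final search size. What remains is only $O(1)$ work per active component per round (maintaining $C$, $D$, $M$), which the paper charges, via Lemma~\ref{lem:component-doubling}, to the $2^{r-1}$ edges that each still-active component pushed in the previous round, with the first round costing $O(k)$ outright. If you replace your ``at most $k$ replacement edges'' argument with this charging (keeping the disjointness you already flag, so examined-edge costs are not counted twice), your proof coincides with the paper's.
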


\begin{proof}
  First consider lines 2--5. Since we are deleting a batch of $k$
  edges, we can find at most $k$ replacement edges to reconnect these
  components. Therefore line 2 performs $O\left(k \log\left(1 +
      \frac{n}{k} \right)\right)$ expected work by Theorem~\ref{thm:ett-bounds}.
  Pushing $t$ spanning tree edges to the next level (line 5) can be
  done in $O\left(t \log\left(\frac{n}{t} + 1 \right)\right))$
  expected work by
  Lemmas~\ref{lem:ettree-fetch-edges},~\ref{lem:ettree-decrease-level},
  and~\ref{lem:component_bounds}, and Theorem~\ref{thm:ett-bounds}.
  Hence in total, lines 2--5 perform at most $O\left(k \log\left(1 +
      \frac{n}{k}\right) + t \log\left(1 + \frac{n}{t}\right) \right)$
  work in expectation.

  Now, consider the cost of the steps which scan or update the
  components that are active in each round. On the first round, this
  cost is $O(k)$. In every subsequent round, $r$, by
  Lemma~\ref{lem:component-doubling} each currently active component must
  have added $2^{r-1}$ edges to be pushed down on the previous
  round. Therefore, we can charge the $O(1)$ work per
  component performed in this round to these edge pushes.

  Next, we analyze the work done while searching for and pushing replacement edges.
  Consider some component $c \in C$ that is searched on this level. By
  Lemma~\ref{lem:component-root-dominated}, the cost of searching for
  and pushing down the replacement edges incident on this component
  is
  \begin{equation}
    O\left(p_{c}\log\left(1 + \frac{n_{c}}{p_{c}}\right)\right)
  \end{equation}
  in expectation, where $n_{c}$ is the number of vertices in $c$ and
  $p_{c}$ is the total number of edges pushed down by $c$.

  The total work done over all components to search for replacement edges
  and push down both the original tree edges, and the edges
  in each round is therefore
  \begin{equation}
    O\left( t \log\left(1 + \frac{n}{t}\right) +
    \sum_{c \in C} p_{c}\log\left(1 + \frac{n_{c}}{p_{c}}\right) \right).
  \end{equation}
  in expectation. Since $\sum n_{c} = n$, by Lemma~\ref{lem:component_bounds} this
  costs
  \begin{equation}
    O\left(p\log\left(1 + \frac{2n}{p}\right)\right) = O\left(p\log\left(1 + \frac{n}{p}\right)\right)
  \end{equation}
  work in expectation, where $p =  t + \sum p_{c}$ is the total number
  of edges pushed, including tree and non-tree edges.
  Therefore, the total cost is
	\begin{equation}
	O\left(k \log\left(1 + \frac{n}{k} \right) +
         p \log\left(1 + \frac{n}{p} \right)\right)
	\end{equation}
  in expectation.
\end{proof}

\begin{theorem}\label{thm:interleaved-work}
The expected amortized cost per edge insertion or deletion is $O(\log^2 n)$.
\end{theorem}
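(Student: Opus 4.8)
The plan is to bound the total work as the sum over all $\log n$ levels of the per-level cost given by Lemma~\ref{lem:interleaved_level_search_cost}, namely $O\!\left(k \log(1 + n/k) + p \log(1 + n/p)\right)$ where $p$ is the number of edges pushed down at that level, and then amortize each term. The first term, summed over $\log n$ levels, contributes $O(k \log n \log(1 + n/k))$ expected work to a batch of $k$ deletions, which is at most $O(k \log^2 n)$; dividing by the batch size $k$ gives $O(\log^2 n)$ amortized per deletion. The second term is the interesting part: each edge pushed down corresponds to a level decrease on that edge, and since $x \log(1 + n/x)$ is increasing in $x$ (Lemma~\ref{lem:batch_bound_is_increasing}), the per-push cost is at most $O(\log n)$; since an edge can decrease in level at most $\log n$ times over its lifetime, the total charge to any single edge over the entire sequence of operations is $O(\log^2 n)$ expected. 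Combined with the $O(\log^2 n)$ cost charged at insertion time (for placing the edge at the top level and whatever bookkeeping that entails, as in Theorem~\ref{thm:insertions}), this gives the claimed $O(\log^2 n)$ expected amortized cost per edge insertion or deletion.

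More concretely, I would proceed as follows. First, invoke Lemma~\ref{lem:batch_delete_exclude_search} — or its analogue for the interleaved algorithm — to handle the work of \textsc{BatchDeletion} outside the level-search calls: deleting edges from adjacency lists, filtering tree edges, deleting tree edges from the $F_j$'s, and computing component representatives, all of which is $O(k \log n \log(1 + n/k)) = O(k \log^2 n)$ expected. Second, sum the bound of Lemma~\ref{lem:interleaved_level_search_cost} over the levels $i = \min_l, \dots, \log n$: the $k \log(1 + n/k)$ terms add up to $O(k \log n \log(1 + n/k))$, and the $p_i \log(1 + n/p_i)$ terms — one per level — are each charged entirely to level decreases, at a rate of $O(\log n)$ expected work per level decrease (using monotonicity so that $p_i \log(1 + n/p_i) \le p_i \log(1 + n) = O(p_i \log n)$, i.e.\ $O(\log n)$ per pushed edge). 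Third, switch from the per-batch view to the per-edge view: over the whole input sequence, the level of an edge can only decrease $\log n$ times, so the total level-decrease work attributable to one edge is $O(\log^2 n)$ expected; together with the $O(\log^2 n)$ paid at its insertion, the amortized cost per edge is $O(\log^2 n)$.

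The main obstacle — and the step requiring the most care — is the amortization of the $p \log(1 + n/p)$ term across the level structure and across the batch boundary. One has to be sure that the edges counted in $p$ at level $i$ genuinely experience a level decrease (from $i$ to $i-1$) in this call, so that the potential argument is sound; this is exactly what Lemma~\ref{lem:component-doubling} and Lemma~\ref{lem:component-root-dominated} are set up to guarantee for the non-tree edges, and the handling of $E_P$ (tree edges that get demoted rather than promoted) in \intlevelsearch{} ensures the tree-edge pushes are likewise real level decreases. Once that is established, the charging is clean: each level decrease pays $O(\log n)$ expected, there are $\le \log n$ of them per edge, and the subtlety of the varying $p_i$ across levels is absorbed because the sum $\sum_i p_i$ over one edge's lifetime is at most $\log n$ regardless of how the pushes are distributed among levels and batches. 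It remains only to note that the depth bounds are exactly those of Theorem~\ref{lem:interleaved-depth} and Theorems~\ref{thm:connectivity-queries} and~\ref{thm:insertions}, so no additional work is needed there.
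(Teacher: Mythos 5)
Your proposal is correct and follows essentially the same route as the paper, which proves this theorem by repeating the charging argument of Theorem~\ref{thm:simple-work} with the per-level cost supplied by Lemma~\ref{lem:interleaved_level_search_cost}: the $k\log(1+n/k)$ terms summed over $\log n$ levels are charged to the deletion batch, and the $p\log(1+n/p)$ terms are charged at $O(\log n)$ per pushed edge to the at most $\log n$ level decreases each edge can undergo. Your write-up is simply a more explicit version of the paper's (very terse) proof, so nothing further is needed.
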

\begin{proof}
  The proof follows from the same argument as
  Theorem~\ref{thm:simple-work}, by using
  Lemma~\ref{lem:interleaved_level_search_cost}.
\end{proof}

\subsection{A Better Work Bound}\label{sec:batch}

We now show that by a more careful analysis, we can obtain a tighter
bound on the amount of work performed by the interleaved algorithm.
In particular, we show in this section that the algorithm performs
\begin{equation}
O\left(\log n \log\left( 1 + \frac{n}{\bark{}} \right) \right)
\end{equation}
amortized work per edge in expectation, where $\bark{}$ is the average
batch size of all batches of deletions.
Therefore, if we process batches of deletions of size
$\Omega(n/\textnormal{polylog}(n))$ on average, our algorithm performs $O(\log n
\log\log n)$ expected amortized work per edge, rather than $O(\log^2
n)$. Furthermore, if we have batches of size $\Omega(n)$, the cost is
just $O(\log n)$ per edge.

At a high level, our proof formalizes the intuition that in the worst
case, all edges are pushed down at every level, and that performing
fewer deletion operations results in larger batches of pushes which
take advantage of work bounds of the ET-tree.
Our proof crucially relies on the fact that although the deletion
algorithm at a level can perform $O(\log n)$ ET-tree operations per
component, since the batch sizes are geometrically increasing, these
operations have the cost of a single ET-tree operation per component.
Furthermore, Lemma~\ref{lem:interleaved_level_search_cost} shows that
the costs per component can be combined so that the total cost is
equivalent to the cost of a single ET-tree operation on all the
vertices. Therefore, the number of deletion operations can be exactly
related to the effective number of ET-tree operations at a level.
We relate the number of deletions to the average batch size, which
lets us obtain a single unified bound for both insertions and
deletions.


%

%


\begin{theorem}
	Using the interleaved deletion algorithm, the amortized work performed by \textsc{BatchDeletion} and \textsc{BatchInsertion} on a batch of $k$ edges is
	\begin{equation}
	O\left(k\log n \log\left( 1 + \frac{n}{\bark{}} \right) \right),
	\end{equation}
  	in expectation where $\bark{}$ is the average batch size of all batch deletions.
\end{theorem}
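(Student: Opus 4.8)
The plan is to bound the total expected work over an arbitrary sequence of update batches and then divide by the total number of updates. Fix such a sequence; let $I$ be the total number of edge insertions, let $D$ be the number of deletion batches, and let $k^{\mathrm{del}}_1,\dots,k^{\mathrm{del}}_D$ be their sizes, so that the total number of deletions equals $\sum_j k^{\mathrm{del}}_j = D\bark{}$, which is at most $I$. I would split the work into three parts: (a) the work of all \textsc{BatchInsert} calls; (b) the work of all \textsc{BatchDeletion} calls \emph{excluding} their invocations of \intlevelsearch{}; and (c) the work of all \intlevelsearch{} calls. By Lemma~\ref{lem:interleaved_level_search_cost}, the call made by deletion batch $j$ at level $i$ costs $O\!\left(k^{\mathrm{del}}_j \log(1 + n/k^{\mathrm{del}}_j) + p_{i,j}\log(1 + n/p_{i,j})\right)$ in expectation, where $p_{i,j}$ is the number of edges (tree and non-tree) pushed from level $i$ to level $i-1$ in that call. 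I would treat the two summands separately, calling them the \emph{batch-size terms} and the \emph{push terms}.

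For the batch-size terms together with parts (a) and (b): each deletion batch runs \intlevelsearch{} on at most $\log n$ levels, so the batch-size terms sum to at most $\log n \sum_j k^{\mathrm{del}}_j \log(1 + n/k^{\mathrm{del}}_j)$, and by Lemma~\ref{lem:batch_delete_exclude_search} the same quantity also bounds part (b). I would then apply Lemma~\ref{lem:component_bounds} with $D$ ``components,'' each given value $n$ and weight $k^{\mathrm{del}}_j$ (so the total value is $Dn$ and the total weight is $D\bark{}$), obtaining $\sum_j k^{\mathrm{del}}_j \log(1 + n/k^{\mathrm{del}}_j) \le D\bark{}\log(1 + n/\bark{})$. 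Hence (b) and the batch-size portion of (c) together cost $O\!\left(D\bark{}\,\log n\,\log(1 + n/\bark{})\right)$. Part (a) costs $O(I\log n)$ by Theorem~\ref{thm:insertions}, which is absorbed since $\log(1 + n/\bark{}) \ge 1$ in the regime of interest $\bark{} \le n$.

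The heart of the argument is the push terms. Fix a level $i$ and consider $p_{i,1},\dots,p_{i,D}$. The key combinatorial fact is that $\sum_j p_{i,j} = P_i \le I$: every edge enters the structure at level $\log n$ and its level only ever decreases, so over its lifetime a single edge is pushed across the boundary from level $i$ to level $i-1$ at most once, and summing over all insertions bounds the total number of such pushes by $I$. Applying Lemma~\ref{lem:component_bounds} once more (again $D$ components, value $n$, weight $p_{i,j}$) gives $\sum_j p_{i,j}\log(1 + n/p_{i,j}) \le P_i\log(1 + Dn/P_i)$. Since $x \mapsto x\log(1 + (Dn)/x)$ is increasing (Lemma~\ref{lem:batch_bound_is_increasing}, applied with $Dn$ in place of $n$) and $P_i \le I$, this is at most $I\log(1 + Dn/I)$, and since $Dn/I \le n/\bark{}$ (using $D\bark{} \le I$) it is at most $I\log(1 + n/\bark{})$. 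This inequality is deterministic, so by linearity of expectation the expected push work at level $i$ is $O\!\left(I\log(1 + n/\bark{})\right)$, and summing over the at most $\log n$ levels, the total expected push work is $O\!\left(I\log n\log(1 + n/\bark{})\right)$.

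Combining the three parts, the total expected work over the sequence is $O\!\left((I + D\bark{})\log n\log(1 + n/\bark{})\right)$, and since $I + D\bark{}$ is exactly the total number of edge insertions and deletions, dividing through shows that a batch of $k$ insertions or deletions is charged $O\!\left(k\log n\log(1 + n/\bark{})\right)$ expected amortized work, as claimed. The main obstacle is the push-term bound: it makes precise the intuition that issuing fewer, larger deletion batches forces the $\Theta(I)$ worst-case level-$i$ pushes to clump into larger batches that the \batchparallel{} ET-tree processes more cheaply per edge, and the three ingredients $P_i \le I$, Lemma~\ref{lem:component_bounds}, and $D \le I/\bark{}$ are precisely what upgrades the ``$\log^2 n$ per edge'' bound of Theorem~\ref{thm:interleaved-work} to ``$\log n\log(1 + n/\bark{})$ per edge.'' A minor point to check is that all of these combinatorial bounds hold pathwise, so linearity of expectation applies cleanly on top of the per-operation expected-cost guarantees of Lemmas~\ref{lem:interleaved_level_search_cost}, \ref{lem:ettree-fetch-edges}, and \ref{lem:ettree-decrease-level}.
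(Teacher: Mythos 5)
Your proposal is correct and follows essentially the same route as the paper's proof: the same split into non-search deletion work plus the batch-size terms of Lemma~\ref{lem:interleaved_level_search_cost} (charged to deletions) and the push terms (charged to insertions), using Lemma~\ref{lem:component_bounds}, Lemma~\ref{lem:batch_bound_is_increasing}, the fact that each edge is pushed at most once per level, and $D\bark{} \le I$ (the paper's $K \le m$, $d = K/\bark{}$). The only difference is cosmetic bookkeeping—you apply Lemma~\ref{lem:component_bounds} level by level with $P_i \le I$, whereas the paper aggregates all $d\log n$ terms at once with $P \le m\log n$—which yields the same bound.
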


\newcommand{\sumwork}{\sum_{\textnormal{\tiny batch } b} \sum_{\textnormal{\tiny level } i}}

\begin{proof}
  Batch insertions perform only $O\left(k \log\left(1 +
      \frac{n}{k}\right)\right)$ work by Theorem~\ref{thm:insertions},
  so we focus on the cost of deletion since it dominates. Consider the total amount of
  work performed by all batch deletion operations at any given point in
  the lifetime of the data structure. We will denote by $k_b$, the
  size of batch $b$, and by $p_{b,i}$, the number of edges pushed down
  on level $i$ during batch $b$.  Combining Lemmas~\ref{lem:batch_delete_exclude_search},
  and \ref{lem:interleaved_level_search_cost}, the total work is bounded
  above by
  \begin{equation}\label{eqn:total_work}
  \begin{split}
  O\left( \sumwork k_b \log\left(1 + \frac{n}{k_b}\right) + p_{b,i} \log\left(1 + \frac{n}{p_{b,i}} \right) \right).
  \end{split}
  \end{equation}
  We begin by analyzing the first term, which is paid for by the deletion algorithm. Let
  \begin{equation}
  K = \sum_{\textnormal{\tiny batch } b} k_b
  \end{equation}
  denote the total number of deleted edges. Applying Lemma~\ref{lem:component_bounds}, and using the fact that there are $\log n$ levels, we have
  \begin{equation}
  O\left( \sumwork k_b \log\left(1 + \frac{n}{k_b}\right)\right) = O\left( K \log n\log\left(1 + \frac{n \cdot d}{K} \right)\right),
  \end{equation}
  where $d$ is the number of batches of deletions. Since $K / d = \bark{}$, this
  is equal to
  \begin{equation}
  O\left( K \log n \log\left(1 + \frac{n}{\bark{}} \right)\right),
  \end{equation}
  work in expectation. Each batch can therefore be charged a cost of $\log n\log\left(1 + n/\bark{} \right)$ per edge, and hence the amortized cost of batch deletion is
  \begin{equation}
  O\left(k\log n \log\left( 1 + \frac{n}{\bark{}} \right) \right)
  \end{equation}
  in expectation.

  The remainder of the cost, which comes entirely from searching for
  replacement edges, is charged to the insertions. Consider this cost and let
  \begin{equation}
  P = \sumwork p_{b,i}
  \end{equation}
  denote the total such number of edge pushes. Since the total number of terms
  in the double sum is $d\log n$, Lemma~\ref{lem:component_bounds} allows\
  us to bound the total work of all pushes by
  \begin{equation}
  \sumwork p_{b,i} \log\left(1 + \frac{n}{p_{b,i}} \right) = O\left( P \log\left( 1 + \frac{ n d \log n}{P} \right)\right).
  \end{equation}
  in expectation. Since every edge can only be pushed down once per level, we have
  \begin{equation}
  P \leq m \log n,
  \end{equation}
  where $m$ is the total number of edges ever inserted. Therefore by Lemma~\ref{lem:batch_bound_is_increasing}, the total work is at most
  \begin{equation}
  O\left( m \log n \log\left(1 + \frac{n d \log n}{m \log n} \right) \right) = O\left( m \log n \log\left(1 + \frac{n d}{m} \right) \right)
  \end{equation}
  in expectation. Since $d = K / \bark{}$, this is equal to
  \begin{equation}
  O\left( m \log n \log\left(1 + \frac{nK}{m\bark{}} \right) \right)
  \end{equation}
  in expectation. Since each edge can be deleted only once, we have $K \leq m$, and hence we obtain that the total work to push all tree edges down is at most
  \begin{equation}
  \begin{split}
  O\left( m \log n \log\left(1 + \frac{n}{\bark{}} \right) \right).
  \end{split}
  \end{equation}
  in expectation. We can therefore charge $ O\left( \log n \log(1 + n/\bark{}) \right)$ per edge to each batch insertion. Since this dominates the cost of the insertion algorithm itself, the amortized cost of batch insertion is therefore
  \begin{equation}
  O\left(k\log n \log\left( 1 + \frac{n}{\bark{}} \right) \right),
  \end{equation}
  in expectation as desired, concluding the proof.
\end{proof}

\section{Related Work}
\myparagraph{Parallel Dynamic Algorithms}
There are only a few results on parallel dynamic algorithms.
Earlier results~\cite{ferragina1994batch, das1994n}  are not work-efficient with respect to the fastest sequential dynamic algorithms, do not support batch updates, and perform polynomial work per update. 
Some more recent results such as parallel dynamic depth-first
search~\cite{khan17dfs} and minimum spanning
forest~\cite{kopelowitz18mst} process updates one at a time, and are therefore not \batchdynamic{} algorithms.
Work efficient parallel \batchdynamic{} algorithms include those for  the well-spaced point sets problem~\cite{acar2011parallelism} and those for the dynamic trees problem~\cite{reif94treecontraction,acar2017brief,tseng2018batch}.

\myparagraph{Parallel Connectivity}
Parallel algorithms for connectivity have a long
history~\cite{Hirschberg1979, ShiloachV82, Vishkin1984, AwerbuchS83,
Reif85, Phillips89, Cole1991, Karger1999}, and there are many existing
algorithms that solve the problem work-efficiently and in
low-depth~\cite{gazit1991optimal,ColeKT96, HalperinZ94, Halperin00,
PettieR02, PoonR97, Shun2014}, some of which are also
practical~\cite{Shun2014, dhulipala2018theoretically}.  However, there
is no obvious way to adapt existing parallel connectivity algorithms
to the dynamic setting, particularly for batch updates.

\myparagraph{Parallel Dictionaries and Trees}
There are many results on parallel dictionaries and trees supporting
batch
updates~\cite{gil1991towards,Blelloch1998, blelloch1999pipelining,
shun2014phase, BlellochFS16, akhremtsev2016fast, sun2018parallel}.
The dictionary data structures in the literature culminated in
dictionaries supporting batch insertions, deletions and lookups in
linear work and $O(\log^{*} n)$ depth w.h.p.~\cite{gil1991towards}.
Early work on batch insertions into trees focused on optimizing the
depth, but was not work-efficient. Paul et al. design batch search,
insertion and deletion algorithms for 2-3 trees on the EREW
PRAM~\cite{paul1983parallel}. These results were later extended to
B-trees by Higham et al.~\cite{higham1994maintaining}. The algorithms
of both Paul et al.  and Higham et al. perform $O(m \log n)$ work for
$m$ tree operations.

Recent work on parallel tree data structures has focused on how to
parallelize batch operations for various balancing schemes in binary
search trees~\cite{BlellochFS16}, and also how to improve the depth of
these operations~\cite{akhremtsev2016fast}. There is also some very
recent work on extending these tree data structures to support range
and segment queries~\cite{sun2018parallel} as well as practical
implementations of parallel trees supporting batch insertions,
deletions and lookups~\cite{pam}.

\myparagraph{Other Related Work}
There is also recent work on parallel working-set structures that
supports batching by Agrawal et al.~\cite{agrawal2018working}.
Earlier work by Agrawal et al.~\cite{agrawal2014batching} introduces
the idea of implicit batching which uses scheduler support to convert
dynamically multithreaded programs using an abstract data type to
programs that perform batch accesses to an underlying parallel data
structure.

\section{Discussion}

In this paper, we present a novel \batchdynamic{} algorithm for the
connectivity problem. Our algorithm is always work-efficient with
respect to the Holm, de Lichtenberg and Thorup dynamic connectivity
algorithm, and is asymptotically faster than their algorithm when the
average batch size is sufficiently large.
A parallel implementation of our algorithm achieves $O(\log^3 n)$
depth w.h.p., and is, to the best of our knowledge, the first parallel
algorithm for the dynamic connectivity problem performing $O(T
\mathop{\text{polylog}}(n))$ total expected work, where $T$ is the total number
of edge operations.

There are several natural questions to address in future work. First,
can the depth of our algorithm be improved to $O(\log^2 n)$ without
increasing the work? Investigating lower bounds in the
batch setting would also be very interesting---are there non-trivial
lower-bounds for \batchdynamic{} connectivity? Lastly, in this paper
we show expected amortized bounds. One approach to strengthen these
bounds is to show that our tree operations hold w.h.p.\ and argue
that our amortized bounds hold w.h.p. Another is to design a
deterministic \batchdynamic{} forest connectivity data structure with
the same asymptotic complexity as the \batchparallel{} ET-tree, which
would make the randomized bounds in this paper deterministic.

Two additional questions are whether we can extend our results to give
parallel work-efficient \batchdynamic{} MST, 2-edge connectivity and
biconnectivity algorithms. MST seems solvable using the techniques
presented in this paper, although our dynamic tree structure would
need to be extended with additional primitives. Existing sequential
2-edge connectivity and biconnectivity algorithms require a dynamic
tree data structure supporting path queries which are not supported by
ET-trees. However, RC-trees~\cite{acar2017brief} can be extended to
support path queries, which makes them a possible candidate for this
line of work.
Finally, it seems likely that ideas from our work can be
extended to give a parallel \batchdynamic{} Monte-Carlo connectivity
algorithm based on the Kapron-King-Mountjoy
algorithm~\cite{kapron2013dynamic}.

\section*{Acknowledgments}

We thank Tom Tseng and Goran Zuzic for helpful discussions. This work was supported in part by NSF grants CCF-1408940, CCF-1533858, and CCF-1629444.

\bibliographystyle{abbrv}
\bibliography{ref}

\appendix
\section{Model}\label{sec:app-model}

The \emph{Multi-Threaded Random-Access Machine (\mpram{})} consists of a set of
\processes{} that share an unbounded memory.  Each \process{} is
essentially a Random Access Machine---it runs a program stored
in memory, has a constant number of registers, and uses standard RAM
instructions (including an \insend{} to finish the computation).  The
only difference between the \mpram{} and a RAM is the \forkins{} instruction that takes a positive
integer $k$ and forks $k$ new child \processes{}.  Each child \process{}
receives a unique identifier in the range $[1,\ldots,k]$ in its first
register and otherwise has the same state as the parent, which
has a $0$ in that register. All children start by running the next
instruction.  When a \process{} performs a \forkins{}, it is suspended until
all the children terminate (execute an \insend{} instruction).  A
computation starts with a single root \process{} and finishes when that
root \process{} terminates. This model supports what is often referred to as
nested parallelism. Note that if root \process{} never forks, it is a
standard sequential program.

We note that we can simulate an \mpram{} algorithm on the CRCW PRAM
equipped with the same operations with an additional $O(\log^{*} n)$
factor in the depth due to load-balancing. Furthermore, a PRAM
algorithm using $P$ processors and $T$ time can be simulated in our
model with $PT$ work and $T$ depth. We equip the model with a
compare-and-swap operation (see Section~\ref{sec:prelims}) in this
paper.

Lastly, we define the cost-bounds for this model.  A computation can
be viewed as a series-parallel DAG in which each instruction is a
vertex, sequential instructions are composed in series, and the forked
sub\processes{} are composed in parallel. The \defn{work} of a
computation is the number of vertices and the \defn{depth}
(\defn{span}) is the length of the longest path in the DAG. We refer
the interested reader to~\cite{blelloch18notes} for more details about
the model.

\section{Data Structures}\label{sec:datastructures}

In this section we describe a simple adjacency-list like data
structure that efficiently supports insertion and deletion of arbitrary edges,
and quickly fetching a batch of $l$ edges. This is the data structure
that we use to store adjacency lists of vertices at each level.
Note that we actually store two adjacency lists, one for tree edges,
and one for non-tree edges. The adjacency list data structure supports
the following operations:

\vspace{0.5em}
\begin{itemize}[topsep=0pt,itemsep=0ex,partopsep=0ex,parsep=1ex, leftmargin=*]
  \item \textbf{$\textproc{InsertEdges}(\{e_1, \ldots, e_l\})$}:
    Insert a batch of edges adjacent to this vertex.

  \item \textbf{$\textproc{DeleteEdges}(\{e_1, \ldots, e_l\})$}:
    Delete a batch of edges adjacent to this vertex.

  \item \textbf{$\textproc{FetchEdges}(l)$}: Return a set of $l$
    arbitrary edges adjacent to this vertex.
\end{itemize}
\vspace{0.5em}

We now show how to implement a data structure that gives us the
following bounds:

\adjacencyops*

\begin{proof}
For a given vertex, the data structure stores a list of pointers to
each adjacent edge in a resizable array. Each edge correspondingly
stores its positions in the adjacency arrays of its two endpoints.
Since each vertex can have at most $O(n)$ edges adjacent to it, the
adjacency arrays are of size at most $O(n)$.

Insertions are easily handled by inserting the batch onto the end of the
array, and resizing if necessary. This costs $O(1)$ amortized work
per edge and $O(\log n)$ depth. To fetch $l$ elements, we simply return
the first $l$ elements of the array, which takes $O(1)$ work per edge and
$O(\log n)$ depth.

Finally, to delete a batch of $l$ edges, the algorithm first determines which
of the edges to be deleted are contained within the final $l$ elements of
the array. It then compacts the final $l$ elements of the array, removing
those edges. Compaction costs $O(l)$ work and $O(\log n)$ depth. The
algorithm then considers the remaining $l'$ edges to be deleted, and
in parallel, swaps these elements with the final $l'$ elements of the array.
The final $l'$ elements in the array can then be safely removed. Note that
any operation that moves an element in the array also updates the
corresponding position value stored in the edge. Swapping and deleting
can be implemented in $O(l')$ work and $(\log n)$ depth, and hence all
operations cost $O(1)$ amortized work per edge and $O(\log n)$ depth.
\end{proof}

\section{Additional Tree Operations}\label{sec:extra-tree-ops}

\myparagraph{Retrieving and Pushing Down Edges}
The \batchparallel{} ET-trees used in this paper augment each node in
the tree with two values indicating the number of tree and non-tree
edges whose level is equal to the level of the forest currently stored
in that subtree. The augmentation is necessary for efficiently
fetching the tree edges that need to be pushed down before searching
the data structure, and for fetching a subset of non-tree edges in a
tree.

We extend the \batchdynamic{} trees interface described earlier with
operations which enable efficiently retrieving, removing and pushing
down batches of tree or non-tree edges.

These primitives are all similar and can be implemented as follows. We
first describe the primitives which fetch and remove a set of $l$ tree
(or non-tree) edges. The algorithm starts by finding a set of vertices
containing $l$ edges.
To do this we perform a binary search on the skip-list in order to
find the first node that has augmented value greater than $l$. The
idea is to sequentially walk at the highest level, summing the
augmented values of nodes we encounter and marking them, until the
first node that we hit whose augmented value makes the counter larger
than $l$, or we return to $v$. In the former case, we descend a level
using this node's downwards pointer, and repeat, until we reach a
level 0 node. We also keep a counter, $ctr$, indicating the number of
tree (non-tree) edges to take from the rightmost marked node at level
0. Otherwise, all nodes at the topmost level are marked. The last step
of the algorithm is to find all descendants of marked nodes that have
a non-zero number of tree (non-tree) edges, and return all tree
(non-tree) edges incident on them. The only exception is the rightmost
marked node, from which we only take $ctr$ many tree (non-tree) edges

Insertions are handled by first inserting the edges into the adjacency
list data structure. We then update the augmented values in the
ET-tree using the primitive from Tseng et al.~\cite{tseng2018batch}.

We now argue that these implementations achieves good work and depth
bounds.

\ettreefetchedges*

\begin{proof}
Standard proofs about skip-lists shows that the number of nodes
traversed in the binary search is $O(\log n)$ w.h.p.~\cite{pugh90,
  tseng2018batch}. We can fetch $l$ edges from each
vertex's adjacency list data structure in $O(l)$ amortized work and
$O(\log n)$ depth by Lemma~\ref{lem:adjacencyops}. The total work is
therefore $O\left(l \log\left(1 + \frac{n_c}{l}\right)\right)$ in
expectation, and the depth is $O(\log n)$ w.h.p.\ since the depth of
the adjacency list access is an additive increase of $O(\log n)$.
Observe that removing the edges can be done in the same bounds
since updating the augmented values after deleting the edges costs
$O\left(l \log\left(1 + \frac{n_c}{l}\right)\right)$ expected work.
\end{proof}

\ettreedecreaselevel*

\begin{proof}
  The proof is identical to the proof of
  Lemma~\ref{lem:ettree-fetch-edges}. The only difference is that the
  augmented values of the nodes that receive an edge must be updated
  after insertion which costs at most $O\left(l \log\left(1 +
      \frac{n_c}{l}\right)\right)$ in expectation. Note that since the
  forest on the lower level is a subgraph of the tree at the current
  level, it has size at most $n_c$, proving the bounds.
\end{proof}

\section{Additional Proofs}\label{sec:additional-proofs}

We now state and give proofs for some of the technical lemmas used in
our proofs of the improved batch bounds.

\componentbounds*
\begin{proof}
	We proceed by induction on $c$. When $c = 1$, the quantities are equal. For $c > 1$, we can write
	\begin{equation}
	\begin{split}
		\sum_{i=1}^c k_i \log\left(1 + \frac{n_i}{k_i}\right) &= \sum_{i=1}^{c-1} k_i \log\left(1 + \frac{n_i}{k_i}\right) + k_c \log\left(1 + \frac{n_c}{k_c}\right), \\
		&\leq (k - k_c) \log\left(1 + \frac{n - n_c}{k - k_c}\right) + k_c \log\left(1 + \frac{n_c}{k_c}\right).
	\end{split}
	\end{equation}
	Then, using the concavity of the logarithm function, we have
	\begin{equation}
		\begin{split}
		\sum_{i=1}^c k_i \log\left(1 + \frac{n_i}{k_i}\right) &\leq k \log\left( \frac{k - k_c}{k} \left(1 + \frac{n - n_c}{k - k_c}\right) + \frac{k_c}{k} \left( 1 + \frac{n_c}{k_c} \right) \right), \\
		&= k \log\left( \frac{k - k_c}{k} + \frac{n - n_c}{k} + \frac{k_c}{k} + \frac{n_c}{k} \right), \\
		&= k \log\left(1 + \frac{n}{k} \right),
		\end{split}
	\end{equation}
	which concludes the proof.
\end{proof}

\batchboundisrootdominated*
\begin{proof}
	First, write
	\begin{equation}
	\begin{split}
	\log\left(1 + \frac{n}{2^w}\right) &= \log\left(1 + 2^{r-w} \frac{n}{2^r}\right), \\
	&\leq \log \left(2^{r-w}\left(1 + \frac{n}{2^r} \right)\right), \\
	&= \log(2^{r-w}) + \log\left( 1 + \frac{n}{2^r} \right), \\
	&= (r-w) + \log\left( 1 + \frac{n}{2^r} \right).
	\end{split}
	\end{equation}
	Now substitute this into the sum to obtain
	\begin{equation}
	\begin{split}
	\sum_{w=0}^r 2^w \log\left(1 + \frac{n}{2^w}\right) \leq \sum_{w=0}^r (r-w)2^w  + \log\left( 1 + \frac{n}{2^r} \right) \sum_{w=0}^r 2^w, \\
	= \sum_{w=0}^r (r-w)2^w + O \left( 2^r \log\left( 1 + \frac{n}{2^r} \right) \right),
	\end{split}
	\end{equation}
	We evaluate the remaining sum by writing
	\begin{equation}
	\sum_{w=0}^r (r-w)2^w = \sum_{w=0}^r \frac{r - w}{2^{r-w}} 2^r,
	\end{equation}
	and then use the fact that
	\begin{equation}
	\sum_{w=0}^r \frac{r - w}{2^{r-w}} = O(1)
	\end{equation}
	to conclude that
	\begin{equation}
	\begin{split}
	\sum_{w=0}^r 2^w \log\left(1 + \frac{n}{2^w}\right) &= O(2^r) + O \left( 2^r \log\left( 1 + \frac{n}{2^r} \right) \right), \\
	&= O \left( 2^r \log\left( 1 + \frac{n}{2^r} \right) \right),
	\end{split}
	\end{equation}
	as desired.
\end{proof}

\batchboundisincreasing*
\begin{proof}
	The derivative of the function with respect to $x$ is
	\begin{equation}
	\log\left(1 + \frac{n}{x}\right) - \frac{n}{n+x}.
	\end{equation}
	We must show that this quantity is strictly positive for all
	$x \geq 1$. First, we use a well-known inequality that states
	\begin{equation}
	a^y \leq 1 + (a - 1)y,
	\end{equation}
	for $a \geq 1$ and $y \in [0,1]$. Using $a = 2$ and $y = n / (n + x)$,
	we obtain
	\begin{equation}
	2^\frac{n}{n+x} \leq 1 + \frac{n}{n+x}.
	\end{equation}
	Since $n \geq 1$ and $x \geq 1$, we have
	\begin{equation}
	1 + \frac{n}{n+x} < 1 + \frac{n}{x},
	\end{equation}
	and hence by transitivity,
	\begin{equation}
	2^\frac{n}{n+x} < 1 + \frac{n}{x}.
	\end{equation}
	Taking logarithms on both sides yields
	\begin{equation}
	\frac{n}{n+x} < \log\left(1 + \frac{n}{x}\right),
	\end{equation}
	which implies the desired result.
\end{proof}

\balance

\end{document}